\def\ps@pprintTitle{%
 \let\@oddhead\@empty
 \let\@evenhead\@empty
 \def\@oddfoot{\centerline{\thepage}}%
 \let\@evenfoot\@oddfoot}
\newtheorem{theorem}{Theorem}
\DeclareMathOperator*{\argmax}{arg\,max}
\DeclareMathOperator*{\argmin}{arg\,min}
\newcommand{\x}{\bm{x}}
\newcommand{\z}{\bm{z}}
\journal{Journal of \LaTeX\ Templates}
\begin{document}

\begin{frontmatter}

\title{Generative Modeling of Turbulence}

\author{Claudia Drygala, Hanno Gottschalk}
\address{University of Wuppertal, School of Mathematics and Natural Sciences, IMACM \& IZMD\\
\{drygala,hanno.gottschalk\}@uni-wuppertal.de}
\author{Benjamin Winhart, Francesca di Mare}
\address{Ruhr University Bochum, Department of Mechanical Engineering, Chair of Thermal Turbomachines and Aero Engines\\\{benjamin.winhart,francesca.dimare\}@ruhr-uni-bochum.de}

\begin{abstract}
We present a mathematically well founded approach for the synthetic modeling of turbulent flows using generative adversarial networks (GAN). Based on the analysis of chaotic, deterministic systems in terms of ergodicity, we outline a mathematical proof that GAN can actually learn to sample state snapshots form the invariant measure of the chaotic system. Based on this analysis, we study a hierarchy of chaotic systems starting with the Lorenz attractor and then carry on to the modeling of turbulent flows with GAN. As training data, we use fields of velocity fluctuations obtained from large eddy simulations (LES). Two architectures are investigated in detail: we  use a deep, convolutional GAN (DCGAN)  to synthesise the turbulent flow around a cylinder. We furthermore simulate the flow around a low pressure turbine stator using the \texttt{pix2pixHD} architecture for a conditional DCGAN being conditioned on the position of a rotating wake in front of the stator. The settings of adversarial training and the effects of using specific GAN architectures are explained. We thereby show that GAN are efficient in simulating turbulence in technically challenging flow problems on the basis of a moderate amount of training data.  GAN training and inference times significantly fall short when compared with classical numerical methods, in particular LES, while still providing turbulent flows in high resolution. We furthermore analyse the statistical properties of the synthesized and LES flow fields, which agree excellently. We also show the ability of the conditional GAN to generalize over changes of geometry by generating turbulent flow fields for positions of the wake that are not included in the training data.       
\end{abstract}
\begin{keyword}
Generative adversarial networks \sep Turbulence modeling \sep Ergodicity \sep Karman vortex street \sep LPT stator
\end{keyword}
\end{frontmatter}
\nolinenumbers
\section{Introduction}
Turbulent flows are characterized by unsteadiness, chaotic-like flow states and high degree of non-linearity. The structures involved exhibit a wide range of spatial and  temporal scales, with the ratio  of largest to smallest structures scaling with the Reynolds number\cite{frisch1995turbulence}. In order to capture all scales of fluid motion directly, very fine computational meshes and time steps are required, which makes the computational effort in the case of engineering-relevant (high Reynolds numbers) problems impossible to accomplish in reasonable time despite the rapidly increasing computer performance. To circumvent this problem, closures are used, which allow to model the structures that cannot be captured by the coarser numerical meshes. However, this  advantage in computation time is paid for with a modeling error, which can be considerable depending on the chosen approach and the underlying flow case. 

Recent developments in the field of machine learning (ML), which are largely driven by increased computational power as well as the availability of exceptionally large data sets, make it possible to address this issue, whereby different approaches can be taken. 
Durbin \cite{durbin_2018} for example, provided an extensive review of currently available modelling approaches highlighting in particular the dichotomy between the need for practically viable solutions and, at the same time, the attempts to reach a universalisation or a generalisation of the representation of turbulence structures which transcends the specifics of particular applications. Whilst the popularity of scale-resolving approach is growing, the costs of industrial design and optimisation still cannot be fully compensated by the growing computing power, so that a novel approach to the improvement of low-fidelity, fast methods is urgently needed.
One obvious approach is ML-based improvement of the prediction quality of existing models, also known as ML augmented turbulence modeling. Here, one possibility is to calibrate the empirically determined constants of the respective models for the underlying use case by means of data-driven ML augmentation \cite{Cheung:2011,Edeling:2014a,Edeling:2014b,Zhang:2018a,Weatheritt:2016,Weatheritt:2017,Zhao:2020,Zhang:2018b}. Pioneering publications in this area are the works of Ling et al. \cite{Ling:2016} and Jiang et al. \cite{Jiang:2020} who used deep neural networks (DNNs) to determine the model constants of nonlinear algebraic eddy viscosity models and were thus able to significantly improve the prediction of anisotropic turbulence effects. Another way is the correction of existing models with the help of additional source terms, which were successfully used in \cite{Parish:2016,Singh:2016a,Singh:2016b,He:2018} for the augmentation of turbulence models and in \cite{Yang:2020} for the augmentation of transition models.

A completely different approach has been pursued recently, based on the generative adversarial networks (GAN) as introduced by Goodfellow \cite{Goodfellow:2014}, which allow a hierarchical identification and abstraction of features in images by means of deep neural networks (DNN). By the fact that also in the case of turbulent flows there is a complex superposition of different structures and scales suggests that these methods are well suited for learning the physical relationships in such flows. In \cite{King:2017,King:2018} it was shown that GAN are able to generate synthesizations of 2D flow fields after they have been previously trained based on DNS data. The reproductions even fulfilled some statistical constraints of turbulent flows such as Kolmogorov's $-\,\nicefrac{5}{3}$ law and the small scale intermittency of turbulence. Using a deep unsupervised learning approach and a combination of a GAN and a recurrent neural network (RNN), Kim \& Lee \cite{Kim:2020,kim2021unsupervised} were able to generate high-resolution turbulent inlet boundary conditions at different Reynolds numbers, which show a statistical similarity to real flow fields. 

Another application of GAN is the field of super-resolution reconstruction of turbulent flows. With these methods it is possible to synthetically scale up flow fields which are low-resolution or noisy due to the measurement technique used or, in the case of numerical data, due to limited data storage capacity. \cite{Fukami:2019,Fukami:2020,Liu:2020,Deng:2019,Xie:2017,Werhahn:2019,Subramanamiam:2020}. These works assume a supervised learning approach, which means that labeled paired datasets of low-resolution and high-resolution images must be available. Here, the low-resolution data sets are usually generated by filtering the high-resolution data sets obtained, for example, from direct numerical simulations (DNS). In  many practical situations, however, such high-resolution data sets are usually not available, which to a certain extent limits the range of applicability. A more general and therefore more practical approach is the unsupervised super-resolution reconstruction method. Here, pairwise data sets are no longer necessary, as Kim \& Lee could show by successfully using an unsupervised GAN for the generation of boundary conditions for turbulent flow \cite{Kim:2020} or  synthesising stationary DNS flow fields \cite{kim2021unsupervised}. Applications of such methods would be e.g. the augmentation or denoising of experimental data sets or the derivation of subgrid-scale models for the application in the field of large-eddy simulation (LES).

In our work, we show the possibility of synthesizing turbulence structures of a similar quality as predicted by of LES with GAN trained from scratch and completely unsupervised. 
As we prove by evaluation of physical quantities, the generated turbulence patterns match the statistical properties of the original LES data quite exactly. Thus, we are able to produce realistic turbulence with help of the trained generator by only having a noise vector as input. Others than in the works discussed so far, the ultimate goal of our research efforts is to devise a structural recognition workflow for a generalised, case-independent synthetisation of turbulent structures which can be carried out independently from a specific configuration. The usage of a computer vision technique sets this work aside with respect to the published works. Moreover, we show by investigation of conditional GAN that generators of synthetic turbulent flows can learn to cope with  changes of the geometry of the flow path, e.g.\ caused by a rotation wake. This remains true even if certain positions of the ware are not included into the training data. We also show that introducing generative learning to model turbulences finds its justification in the enormous reduction of computational time compared to LES, while maintaining the resolution. Lastly,  besides the practical aspects, we prove, using the mathematical concept of ergodicity, that learning to generate states of chaotic systems using GAN is possible.

\paragraph{Outline} 
The paper is organised as follows. In \cref{sec:methodology} we briefly summarize the concept of ergodicity, discuss the mathematical foundations behind GAN  along with the learning theory for deterministic ergodic systems. Also, a survey of modern GAN architectures is given. The hierarchy of datasets used for our experiments, ranging from the Lorenz attractor and the flow around a cylinder to a perioic wake impinging on a low-pressure turbine stator blade,  are described in \cref{sec:datasets}.  This is followed by \cref{sec:gan_configs}, where we  give details on the training of our various GAN models. In \cref{sec:results} we discuss the results of our numerical experiments, while in \cref{sec:physics_based_evaluation} we compare the statistical properties of simulated and synthesized turbulence. Finally,  in \cref{sec:conclusion} we present the conclusion and an short outlook.

\section{Methodology}
\label{sec:methodology}

In this work we apply generative adversarial networks (GAN) to generate typical states of a deterministic chaotic dynamic system. This is made mathematically precise via the notion of ergodicity \cite{ergodicity_basics}. 

\subsection{Ergodicity}
Let be $(\Omega, \mathcal{A}, \mu)$ a probability space, consisting of a state space $\Omega$, a collection $\mathcal{A}$ of events/subsets of the state space $A\subseteq \Omega$  known as $\sigma$-algebra and a probability measure on $\mathcal{A}$ that attributes the probability $\mu(A)$ to the events $A\in\mathcal{A}$. In our context, the state $\Omega$ is chosen as the phase space of a dynamic system $\varphi_t:\Omega\to\Omega$, $t\in\mathbb{R}$, that fulfills $\varphi_t\circ\varphi_s=\varphi_{s+t}$ and $\varphi_0(x_0)=x$. 

Frequently in this work, we need the concept of an image measure, i.e. the transformation of a measure by a mapping. To this purpose, let $\varphi : \Omega \to \Omega'$ be a mesurable mapping with respect to the $\sigma$-algebra $\mathcal{A}$ on $\Omega$ and a second sigma algebra $\mathcal{A}'$ on $\Omega'$, i.e. for all $A'\in\mathcal{A}'$ we have $\varphi^{-1}(A')=\{x\in\omega|\varphi(x)\in A'\}\in\mathcal{A}$. The image probability measure of $\mu$ under $\varphi$, denoted by $\varphi_*\mu$, is then defined by
\begin{align}
    \varphi_*\mu(A')=\mu(\varphi^{-1}(A')) ~~\forall A'\in\mathcal{A}'.
\end{align}
In the following, without further mention, we assume all mappings to be measurable with respect to suitable $\sigma$-algebras.

In the case considered here, $\Omega$ is a state space of a dynamic system. A dynamic system with the given state space consists of a collection of mappings $\varphi^t:\Omega\to\Omega$ that fulfill $\varphi^0=\mathrm{id}_\Omega$ and $\varphi^t\circ\varphi^s=\varphi^{s+t}$, where $\varphi^t\circ\varphi^s(x)=\varphi^t(\varphi^s(x))$.  In many cases, like ours, the state of the dynamic system $\varphi_t(x)$ at time $t\in\mathbb{R}$ is obtained as a solution mapping $\varphi_t:\Omega\to\Omega$ associated with a (discretized) ordinary or partial differential equation starting in the initial state $x\in\Omega$. E.g., $\Omega=\mathbb{R}^3$ for the case of the Lorenz attractor or $\Omega=\mathbb{R}^d$ with $d$ a large number of dimensions of the disrcetized state space of the fluid field in the case of the numerical simulation of turbulent fluids.

The probability measure $\mu$ is an invariant measure for the dynamic system defined by $\varphi^t$, if all solution mappings $\varphi^t$ are measure preserving with respect to $\mu$, i.e. $\varphi^t_*\mu=\mu$ for all $t\in \mathbb{R}$.

We next turn to the space of physical observables on $\Omega$ and define it as space $\mathcal{H}$ of all square-integrable  functions $f:\Omega\to\mathbb{R}$, i.e.
    \begin{align}
        \mathcal{H}:=L^2(\Omega, \mathcal{A}, \mu) =  \left \{ f:\Omega \rightarrow \mathbb{R}: f~{\rm measurable}, \int_\Omega |f|^2 \,\mathrm{d}\mu < \infty \right \}~.
    \end{align}

We next turn to to the notion of ergodicity, which equates the time average of a dynamic system with the ensemble average of its invariant measure.  In  mathematical notation, ergodicity of the dynamic system $\varphi_t$ with respect to the invariant measure $\mu$ is defined as
       \begin{align}
            \lim_{T\rightarrow \infty} \dfrac{1}{T} \int_{0}^{T} f\circ \varphi_t(x_0) \, \mathrm{d} t= \int_\Omega f(x)  \,\mathrm{d} \mu(x) =\mathbb{E}_ {\x \sim\mu}[f(\x)]~\forall x_0\in\Omega.
            \label{eq:mean_ergodic_theorem_cont}
        \end{align}
Neumann \cite{Neumann_met} and Birkhoff \cite{Birkhoff_proof} established quite general  conditions, under which ergodicity holds. See also \cite{ergodicity_farkas, ergodicity_arbabi} for extensive treatments of discrete and time continuous ergodic systems.

In some of our numerical experiments, we do not consider the entire state-space $\Omega$, but reduce the degrees of freedom using a  mapping $\pi:\Omega\to\Omega'$ with $\Omega'$ the reduced state space. Let $\pi_*\mu$ be the projected measure. Assuming the ergodictity of the original dynamics $\varphi_t$ with respect to $\mu$, we see that 
\begin{align}
            \lim_{T\rightarrow \infty} \dfrac{1}{T} \int_{0}^{T} f\circ \pi\circ \varphi_t(x_0) \, \mathrm{d} t= \int_{\Omega'} f(x') \, \mathrm{d} \pi_*\mu(x') ~\forall x_0\in\Omega,
            \label{eq:mean_ergodic_theorem_cont_reduced}
        \end{align}
whenever $f\circ \pi$ is square integrable with respect to $\mu$. This easily follows from the general transformation formula $\int_{\Omega'} f(x') \, \mathrm{d} \,\pi_*\mu(x')=\int_{\Omega'} f\circ \pi(x) \, \mathrm{d} \,\mu(x)$ and \eqref{eq:mean_ergodic_theorem_cont}. Hence, ergodicity remains meaningful on the reduced state space $\Omega'$, even if the dynamics $\varphi_t$ can not be consistently formulated on $\Omega'$.   

\subsection{Mathematical foundations of generative learning for ergodic systems}
\label{sec:foundations_of_gan_4_ergodic_systems}
Generative Adversarial Networks (GAN) consist of two  mappings - a generator $\phi :\Lambda\to\Omega$ and a discriminator $D: \Omega\to [0,1]$. Here $\Lambda$ is a space of latent variables endowed with a probability measure $\lambda$ that is easy to simulate, e.g. experiments uniform or Gaussian noise. The generator $\phi$ transforms the noise measure $\lambda$ to the image measure $\phi_*\lambda$. The goal of adversarial learning is, to learn a mapping $\phi$ from the feedback of the discriminator $D$, such that $D$ is not able to distinguish synthetic samples from $\phi_*\lambda$ from real samples from the target measure $\mu$. However, the discriminator $D$ is a classifier that trained to assign real data a high probability of being real and synthetic data a low probability. If $\phi$ has been so well trained, that even the best discriminator $D$ can not distinguish between samples from $\mu$ and $\phi_*\lambda$, generative learning is successful, see also Fig. \ref{fig:vanillaGAN}.

In practice, both the generator $\varphi$ and the discriminator $D$ are realized by neural networks. The feedback of $D$ to $\phi$ is transported backwards by back-propagation \cite{backprop} through the concatenated mapping $D\circ\phi$ in order to train the weights of the neural network $\phi$.  At the same time, the universal approximation property of (deep) neural networks guarantees that any mappings $\phi$ and $D$ can be represented with a given precision, provided the architecture of the networks is sufficiently wide and deep, see \cite{vanilla_gan, asatryan2020convenient, dcgan, pix2pixHD, pix2pix, CycleGAN2017, wang2018esrgan, karras2019style} for qualitative and quantitative results.

The training of GAN is organized as a two-player minimax game between $D$ and $\phi$. Mathematically, it is described by the min-max optimization problem
\begin{align}
    \min_\phi \max_D \mathcal{L}(D, \phi)
    \label{eq:vanilla_gan_optimization_problem}
\end{align}
with the loss function, also known as binary cross-entropy \cite{bce}
\begin{align}
     \mathcal{L}(D, \phi) = \mathbb{E}_{\x\sim \mu}[\log(D(\x))]+\mathbb{E}_{\z\sim \lambda}[\log(1-D(\phi(\z)))]~.
    \label{eq:vanilla_gan_loss}
\end{align}
Here, the expected value is denoted by $\mathbb{E}$, the random variable $\x$ with values in $\Omega$ follows the distribution $\mu$ of the real world data and the latent random variable $\z$ with values in $\Lambda$ follows the distribution of the noise measure $\lambda$.
As has been observed in \cite{vanilla_gan},
\begin{align}
    \label{eq:loss_JS}
    \max_{D\in\mathcal{H}_D}\mathcal{L}(D, \phi)=\mathfrak{d}_{\text{JS}}(\mu\|\phi_*\lambda)+\log(4)
\end{align}
if the maximum is taken over a sufficiently large hypothesis space $\mathcal{H}_D$ of discriminators. Here, $\mathfrak{d}_{\text{JS}}(\mu\|\phi_*\lambda)$ stands for an information theoretic pseudo distance between the invariant measure $\mu$ and the generated measure $\phi_*\lambda$ known as the Jensen-Shannon  divergence
\begin{align}
    \label{eq:Jenson-Shannon}
    \mathfrak{d}_{\text{JS}}(\mu\|\phi_*\lambda)=\mathfrak{d}_{\text{KL}}\left(\mu\left\|\frac{\phi_*\lambda+\mu}{2}\right.\right)+\mathfrak{d}_{\text{KL}}\left(\phi_*\lambda\left\|\frac{\phi_*\lambda+\mu}{2}\right.\right),
\end{align}
with $\mathfrak{d}_{\text{KL}}(\mu\|\nu)=-\mathbb{E}_{\x\sim \mu}\left[\log\left(\frac{f_\nu}{f_\mu}(\x)\right)\right]$ the Kulback-Leibler pseudo distance between the measures $\nu$ and $\mu$ with continuous probability densities $f_\mu$ and $f_\nu$, respectively. Note that $\mathfrak{d}_{\text{KL}}(\mu\|\nu)=0$ holds if and only if $f_\mu(\x)=f_\nu(\x)$ holds with $\mu$-probability one and hence $\mu=\nu$.  Consequently, also $ \mathfrak{d}_{\text{JS}}(\mu\|\phi_*\lambda)$ measures the distance between $\mu$ and $\phi_*\lambda$.

\begin{figure}[t]
    \centering
    \includegraphics[scale=0.9]{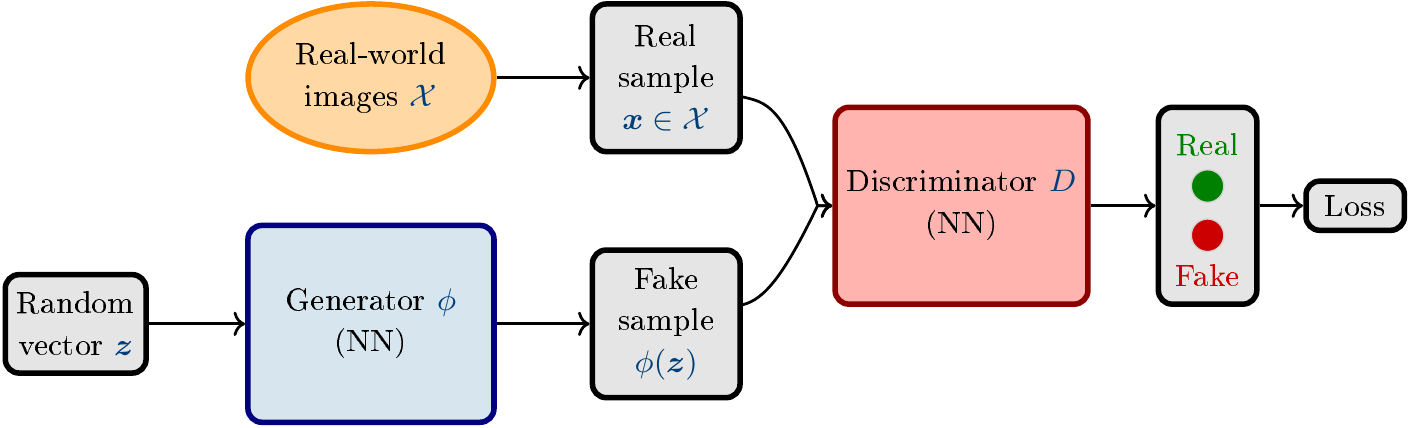}
    \caption{Architecture of the original GAN. According to: \cite{vanilla_gan}. The generator produces fake samples $\phi(\z)\sim \phi_*\lambda$ by the random vector $\z$ as its input. On the other hand the real-world data $\mathcal{X}$ representing the training data is given. The discriminator gets as input fake as well as real samples and estimates the probability that the given input sample comes from $\mathcal{X}$ than generated by $\phi$. Thus, the output of $D$ is a single scalar value per sample in the range of $[0,1]$. The feedback of the discriminator reaches the generator when the weights of the GAN framework are updated by backpropagation \cite{backprop} during the training. Since both networks are fully differentiable and trained end-to-end the whole GAN framework can be backpropagated in one go using the same loss function for $\phi$ and $D$. The optimum of the problem \eqref{eq:vanilla_gan_optimization_problem} is reached if the distribution of the real-world data is captured by the generator and the discriminator is not able to distinguish real from fake samples, so  $\phi_*\lambda=\mu$  and $D(\cdot)=\nicefrac{1}{2}$.}
    \label{fig:vanillaGAN}
\end{figure}

\subsection{Learning theory for deterministic ergodic systems}
In this work, we show that it is possible to model turbulent flows with GAN in practice. In this section we outline a proof that generative learning for deterministic ergodic systems converges in the limit of large observation time $T$. 

As described in \cref{sec:foundations_of_gan_4_ergodic_systems} $\mu$ is the unknown invariant measure encoding the statistical properties of the dynamic system $\varphi_t(x_0)$ with $x_0$ the initial state. Our goal is to sample from $\mu$ but since it is unknown, we want to learn it from the data given by the observed trajectory $\varphi_t(x_0)$. Thus, in context of generative learning a generator $\Tilde \phi$ is searched for which holds $\Tilde{\phi}_* \lambda = \mu$, where $\lambda$ is, e.g., the Lebesgue measure than corresponds to $d$-dimensional uniform noise.  

Let $\mu$ be the invariant measure of the dynamic system $\{\varphi_t\}_{t\in\mathbb{R}}$ acting  on the measurable space $([0,1]^d, \mathcal{B}([0,1]^d))$ with $\mathcal{B}([0,1]^d)$ the Borel-$\sigma$-algebra and $[0,1]^d$ the sample space of state configurations  with normalized state components in $[0,1]$. It is assumed that $d\mu(x)=f(x)d\lambda(x)$ with the continuous probability density $f(x)>0$ in the space of $k$-times differentiable $\alpha$-Hölder functions $\mathcal{C}^{k,\alpha}([0,1]^d, \mathbb{R})$ \cite{adams2003sobolev}. If this is not the case, one can easily regularize $\mu$ to achieve this. Moreover, we assume $\phi:[0,1]^d \rightarrow [0,1]^d$ also lies in the space of $k$-$\alpha$-Hölder functions $\mathcal{C}^{k,\alpha}([0,1]^d, \mathbb{R}^d)$, $k\geq 1$. By the realizability theorem of \cite{asatryan2020convenient} it follows that $\exists~ \phi_0 \in \mathcal{C}^{k,\alpha}([0,1]^d, \mathbb{R}^d)$, such that 
\begin{align}
    \phi_{0_{*}} \lambda=\mu.
    \label{eq:realizability}
\end{align}
By knowing that $\phi_{0_{*}} \lambda^{(d)}$ is realizable in the hypotheses space 
  \begin{align}
        \mathcal{H}=\{\phi \in \mathcal{C}^{k,\alpha}([0,1]^d, \mathbb{R}^d)| \lVert \phi \rVert_{\mathcal{C}^{k,\alpha}}\leq K,  \lVert \phi^{-1} \rVert_{\mathcal{C}^{k,\alpha}}\}
    \end{align}
for $K>0$ sufficiently large, our goal is to estimate $\phi_0$ by $\hat{\phi}_T \in \mathcal{H}$ based on the data given by the ergodic flow  $\bm{\varphi}_T=\{\varphi_t(x_0)\}_{T\geq t\geq 0}$.

The estimation of $\phi_0$ is performed using an empirical loss function $\hat{L}(\phi, D, \bm{\varphi}_T)$ that is designed to approximate the theoretical loss function \eqref{eq:vanilla_gan_loss} and hence minimizing the difference between the measure $\mu$ of the ergodic system and the image measure $\phi_* \lambda$ of the synthesized images. Mathematically, we search the generator
\begin{align}
    \hat{\phi}_T \in \argmin_{\phi \in \mathcal{H}} \sup_{D\in \mathcal{H}_D} \hat{\mathcal{L}}(\phi, D,  \bm{\varphi}_T)
    \label{eq:MLE}
\end{align}
 with the discriminator hypotheses space $\mathcal{H}_D$ such that an optimal choice of $D$ is feasible:
\begin{align}
    \mathcal{H}_D = \Big\{ D_{\phi, \phi'} = \frac{f_\phi}{f_\phi + f_{\phi'}} \Big| \phi, \phi' \in \mathcal{H} \Big\}~.
\end{align}
Here, $f_\phi(x)=|\det(D\phi^{-1})(x)|$ stands for the continuous probability density associated with the probability measure $\phi_*\lambda$.  We propose
    \begin{align}
    \hat{\mathcal{L}}(\phi, D, \bm{\varphi}_T) &=\dfrac{1}{T}\int_0^T\log(D(\varphi_t(x_0)))\,\mathrm{d}t + \dfrac{1}{[T]}\sum_{j=1}^{[T]} \log(1-D(\phi(z_j)))
    \label{eq:emprical_loss_ergodic_flow}
    \end{align}
as empirical loss function for the ergodic system where $[-]:\mathbb{R}\rightarrow \mathbb{Z}$ denotes the rounding function.

Apparently, in the limit $T\to\infty$ by ergodicity \eqref{eq:mean_ergodic_theorem_cont}, the first term converges to the first term in \eqref{eq:vanilla_gan_loss} whereas the second term converges almost surely by the law of large numbers. Therefore, the generator $\hat\phi$ that is learned from the empirical loss function \eqref{eq:emprical_loss_ergodic_flow} for large $T$ will approximately solve the minimax problem \eqref{eq:vanilla_gan_optimization_problem}, which by \eqref{eq:loss_JS} relates to the Jensen-Shannon distance between the estimated measure $\hat\phi_{T*}\lambda$ and the invariant measure $\mu$ of the ergodic system. In particular, we obtain the following:

\begin{theorem}
\label{theo:learning_ergodic}
Under the assumptions above it holds almost surely\footnote{w.r.t. the probability measure used for the sampling of the latent noise variables $z_j$.} that 
\begin{align}
    \label{eq:GAN_learning}
    \lim_{T\to\infty}\mathfrak{d}_{\text{JS}}(\mu\|\hat\phi_{T*}\lambda)=0.
\end{align}
\end{theorem}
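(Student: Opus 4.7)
The plan is to decompose the argument into a statistical step (convergence of the empirical loss to the population loss) and an analytic step (turning closeness of losses into closeness of JS divergences). First, for any fixed $(\phi, D) \in \mathcal{H} \times \mathcal{H}_D$, I would show that $\hat{\mathcal{L}}(\phi, D, \bm{\varphi}_T) \to \mathcal{L}(D, \phi)$ almost surely. The first term in \eqref{eq:emprical_loss_ergodic_flow} is a time average of $\log \circ D$ along the trajectory $\varphi_t(x_0)$, so the ergodic identity \eqref{eq:mean_ergodic_theorem_cont} gives the convergence to $\mathbb{E}_{\x \sim \mu}[\log D(\x)]$ provided $\log D \in L^2(\Omega, \mu)$; the second term is a Monte Carlo average over the i.i.d.\ latent variables $z_j \sim \lambda$ and converges by the strong law of large numbers. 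Both integrability requirements reduce to a uniform two-sided bound on the densities $f_\phi = |\det D\phi^{-1}|$, which is supplied by the bound $\|\phi^{-1}\|_{C^{k,\alpha}} \leq K$ built into the hypothesis class $\mathcal{H}$.

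The second step promotes this pointwise almost sure convergence to uniform almost sure convergence over the whole hypothesis space. The space $\mathcal{H}$ is a bounded subset of $C^{k,\alpha}([0,1]^d, \mathbb{R}^d)$, hence relatively compact in $C^{k-1,\alpha}$ by Arzelà–Ascoli, and the same compactness transfers to the parametrized family $\mathcal{H}_D = \{D_{\phi,\phi'}\}$. Because the uniform bounds on $f_\phi$ keep every $D \in \mathcal{H}_D$ strictly between two constants in $(0,1)$, the maps $D \mapsto \log D$ and $D \mapsto \log(1-D)$ are uniformly Lipschitz on $\mathcal{H}_D$, so both $(\phi, D) \mapsto \mathcal{L}(D,\phi)$ and $(\phi, D) \mapsto \hat{\mathcal{L}}(\phi, D, \bm{\varphi}_T)$ are equicontinuous with $T$-independent modulus. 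A finite $\varepsilon$-net argument combined with the pointwise convergence at finitely many centers then yields
\begin{equation*}
\sup_{\phi \in \mathcal{H}} \sup_{D \in \mathcal{H}_D} \bigl|\hat{\mathcal{L}}(\phi, D, \bm{\varphi}_T) - \mathcal{L}(D, \phi)\bigr| \xrightarrow[T\to\infty]{\text{a.s.}} 0.
\end{equation*}

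Third, I would define the profiled losses $L(\phi) = \sup_{D \in \mathcal{H}_D} \mathcal{L}(D, \phi)$ and $\hat L_T(\phi) = \sup_{D \in \mathcal{H}_D} \hat{\mathcal{L}}(\phi, D, \bm{\varphi}_T)$. The identity \eqref{eq:loss_JS}, applied inside the restricted class $\mathcal{H}_D$ (which is rich enough to realize the Bayes-optimal discriminator $f_\mu/(f_\mu + f_{\phi_*\lambda})$ thanks to the realizability assumption \eqref{eq:realizability}), gives $L(\phi) = \mathfrak{d}_{\text{JS}}(\mu \| \phi_*\lambda) + \log 4$. Uniform convergence transfers from $\mathcal{L}$ to the suprema, so $\sup_{\phi \in \mathcal{H}} |\hat L_T(\phi) - L(\phi)| \to 0$ almost surely. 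Realizability gives $\inf_\phi L(\phi) = L(\phi_0) = \log 4$, and a standard M-estimation inequality,
\begin{equation*}
0 \leq L(\hat\phi_T) - \log 4 \leq 2 \sup_{\phi \in \mathcal{H}} |\hat L_T(\phi) - L(\phi)| + \bigl(\hat L_T(\hat\phi_T) - \hat L_T(\phi_0)\bigr),
\end{equation*}
where the last bracket is nonpositive by the definition of $\hat\phi_T$ in \eqref{eq:MLE}, delivers $L(\hat\phi_T) \to \log 4$, which is exactly \eqref{eq:GAN_learning}.

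The main obstacle I anticipate is the uniform convergence step: one must prevent the Lipschitz constants of $D \mapsto \log D$ and $D \mapsto \log(1-D)$ from exploding, which in turn requires genuinely uniform two-sided bounds on the densities $f_\phi$ over all $\phi \in \mathcal{H}$. This is where the Hölder-ball constraint on $\phi$ together with the inverse bound on $\phi^{-1}$ is essential, and any weakening of these structural assumptions would break the equicontinuity argument and, with it, the transition from pointwise ergodicity to a uniform learning guarantee.
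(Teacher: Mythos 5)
Your proposal is correct and follows essentially the same route as the paper's proof: your M-estimation inequality with the profiled losses $L(\phi)=\sup_{D\in\mathcal{H}_D}\mathcal{L}(D,\phi)$ and $\hat L_T$ is exactly the paper's chain of inequalities (ending in $\mathfrak{d}_{\text{JS}}(\mu\|\hat\phi_{T*}\lambda)\leq 2\sup_{\phi,D}|\mathcal{L}-\hat{\mathcal{L}}|$) written in compact form, and your second step (compactness of the Hölder balls in a weakened topology, uniform two-sided density bounds keeping $D$ away from $0$ and $1$, equicontinuity plus pointwise ergodic/LLN convergence upgraded to uniform convergence) matches the paper's treatment of the sampling error. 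One minor correction: weaken the topology to $\mathcal{C}^{k,\alpha'}$ with $\alpha'<\alpha$ as the paper does, rather than to $\mathcal{C}^{k-1,\alpha}$, since for $k=1$ the latter loses control of $D\phi^{-1}$ and hence of the densities $f_\phi=|\det(D\phi^{-1})|$ on which the continuity of $(\phi,\phi')\mapsto D_{\phi,\phi'}$ and your $\varepsilon$-net argument depend.
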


\begin{proof} Here we give a sketch of the proof. For a detailed argument in a related situation, see \cite{asatryan2020convenient}.  
We introduce the following notation: $D_\phi$ is the discriminator solving $D_\phi\in \argmax \mathcal{L}(\phi, D, \bm{\varphi}_T)$ and, likewise, $\hat{D}_\phi\in \argmax \hat{\mathcal{L}}(\phi, D, \bm{\varphi}_T)$, where we suppressed the suppressed the $T$ dependence of $D$ and $\hat{D}$ to ease the notation. We obtain the estimate 
\begin{align}
        \begin{split}
        & \mathfrak{d}_{\text{JS}}(\mu|| \phi_* \lambda) = \mathcal{L}(\hat{\phi}_T, D_{\hat{\phi}_T}) - \log(4)\\
        &\leq  \hat{ \mathcal{L}}(\hat{\phi}_T, D_{\hat{\phi}_T}, \bm{\varphi}_T)+\sup_{\phi,D}\left|\mathcal{L}(\phi, D)-\hat{ \mathcal{L}}(\phi, D, \bm{\varphi}_T)\right|-\log(4)\\
        &\leq  \hat{ \mathcal{L}}(\hat{\phi}_T, \hat{D}_{\hat{\phi}_T}, \bm{\varphi}_T)+\sup_{\phi,D}\left|\mathcal{L}(\phi, D)-\hat{ \mathcal{L}}(\phi, D, \bm{\varphi}_T)\right|-\log(4)\\
        &\leq  \hat{ \mathcal{L}}(\phi_0, \hat{D}_{\hat{\phi}_T}, \bm{\varphi}_T)+\sup_{\phi,D}\left|\mathcal{L}(\phi, D)-\hat{ \mathcal{L}}(\phi, D, \bm{\varphi}_T)\right|-\log(4)\\
        &\leq  \mathcal{L}(\phi_0, \hat{D}_{\hat{\phi}_T})+2\sup_{\phi,D}\left|\mathcal{L}(\phi, D)-\hat{ \mathcal{L}}(\phi, D, \bm{\varphi}_T)\right|-\log(4)\\
        &\leq  \mathcal{L}(\phi_0, D_{\phi_0})+2\sup_{\phi,D}\left|\mathcal{L}(\phi, D)-\hat{ \mathcal{L}}(\phi, D, \bm{\varphi}_T)\right|-\log(4)\\
        &=  \mathfrak{d}_{\text{JS}}(\mu\| \phi_{0*}\lambda)+2\sup_{\phi,D}\left|\mathcal{L}(\phi, D)-\hat{ \mathcal{L}}(\phi, D, \bm{\varphi}_T)\right|\\
        &=2\sup_{\phi,D}\left|\mathcal{L}(\phi, D)-\hat{ \mathcal{L}}(\phi, D, \bm{\varphi}_T)\right|
        \label{eq:sampling_error}
        \end{split}
    \end{align}
In the first equality we used \eqref{eq:loss_JS}. In the third line, the definition of $\hat{D}_{\hat{\phi}_T}$ was used and in the fourth line we applied \eqref{eq:MLE}. In the sixth line, we used the definition of $D_{\phi_0}$. In the seventh line, we again used \eqref{eq:loss_JS} and in the final line we applied \eqref{eq:realizability}, which is possible under the given assumptions as proven in \cite{asatryan2020convenient}. 

It remains for us to show that the sampling error on the right hand side of \eqref{eq:sampling_error} vanishes as $T\rightarrow \infty$. Note that we can decompose 
\begin{align}
    \begin{split}
    &\sup_{\phi,D}\left|\mathcal{L}(\phi, D)-\hat{ \mathcal{L}}(\phi, D, \bm{\varphi}_T)\right|\\
    &\leq \sup_{\phi,D}\left|\mathbb{E}_{\x\sim \mu}[\log(D(\x))]-\dfrac{1}{T}\int_0^T\log(D(\varphi_t(x_0)))\,\mathrm{d} t\right|\\
    &+\sup_{\phi,D}\left|\mathbb{E}_{\z\sim \lambda}[\log(1-D(\phi(\z)))]-\dfrac{1}{[T]}\sum_{j=1}^{[T]} \log(1-D(\phi(z_j)))\right|
        \label{eq:decompose}
    \end{split}
\end{align}
The second term on the right hand side vanishes by the uniform law of large numbers, as the hypothesis spaces $\mathcal{H}$ and $\mathcal{H}_D$ can be endowed with $C^{k,\alpha'}$ topologies that are, for $\alpha'<\alpha$, slightly little weaker than the $C^{k,\alpha}$-topology. Nevertheless, the hypothesis spaces under these topologies are compact, see \cite{asatryan2020convenient} for the details. Consequently, the expression in the first term vanishes by the standard uniform law of large numbers, see e.g. \cite{ferguson2017course}. 

For the first term, we have already seen that ergodicity implies that the expressions in the absolute value by ergodicity vanish in the limit $T\to\infty$. Also, with respect to the aforementioned $C^{k,\alpha'}$-topologies the hypothesis spaces are compact. Last, it is easy to see that $\dfrac{1}{T}\int_0^T\log(D(\varphi_t(x_0)))\,\mathrm{d} t$ is equicontinuous in $D$ wrt.\ this topology (as $D(x)$ is uniformly lower bounded away from zero in $\mathcal{H}_D$). As for equicontinuous functions, pointwise convergence implies uniform convergence, the first term on the right hand side vanishes as well in the limit $T\to\infty$.  
\end{proof}

We note that in practice, the Hölder generators $\phi$ and discriminators $D$ are replaced by deep neural networks. As such networks possess the universal approximation property, see e.g. \cite{yarotsky2017error}, one can approximate the Hölder functions to arbitrary precision. Secondly, instead of solving the integral in \eqref{eq:emprical_loss_ergodic_flow} to compute the loss function, one uses a monte carlo approximation by sampling from the trajectory $\bm{\varphi}_T$. Theorem \ref{theo:learning_ergodic} remains valid under this replacement, as one can see from one further application of the uniform law of large numbers. 

Note however that these theoretical results do not guarantee the success of the numerical experiments. This is mostly due to the fact that the optimization problem \eqref{eq:MLE} is highly non-convex and can not be solved exactly, as e.g. for neural nets this problem is NP-hard \cite{shalev2014understanding}.  In practice, one rather finds sufficiently good local minima instead of a global optimum. Also, practical issues occur with the choice of the capacity and other elements of architecture of the neural networks.    
\subsection{Advanced GAN frameworks}
\label{ssec:GAN_frameworks}
After the introduction of the original GAN framework by Goodfellow \cref{fig:vanillaGAN},  it became apparent that GAN are powerful models which can be applied to a wide variety of tasks by modifying or extending the architecture \cite{gan_survey}. In this work three of these modified frameworks are investigated.

\paragraph{Wasserstein GAN (WGAN)}
The Wasserstein GAN differs from the original GAN mainly in the change of the loss function and thus also in the change of the optimization problem \cite{wgan}. For the WGAN framework the goal is not to minimize the Jensen Shannon divergence but the Wasserstein distance expressed by the Kantorovich-Rubinstein duality
    \begin{align}
        \mathcal{W}(\mu, \phi_*\lambda) =\dfrac{1}{K}\sup_{\lVert \psi \rVert_L \leq K} \Big(\mathbb{E}_{\x\sim \mu}[\psi(\x)]-\mathbb{E}_{\z\sim \lambda}[\psi(\phi(\z))]\Big)
        \label{eq:wgan_distance}
    \end{align}
with the supremum over all $K$-Lipschitz functions $\psi:\mathcal{C}\rightarrow \mathbb{R}$ and $\mathcal{C}$ a compact metric set.
Under the satisfaction of certain conditions the authors of \cite{wgan} showed that the optimization problem 
  \begin{align}
        \max_{\lVert \psi \rVert \leq 1} \Big(\mathbb{E}_{\x\sim \mu}[\psi(\x)]-\mathbb{E}_{\z\sim \lambda}[\psi(\phi(\z))]\Big)
    \label{eq:wgan_optimization_problem}
    \end{align}
    has a solution for $K=1$ and that the gradient of \eqref{eq:wgan_distance} exists.

In practice, the solution of \eqref{eq:wgan_optimization_problem} can be approximated by training a neural network $\psi=\psi_w$ parameterized by the weights $w \in \mathcal{W}$ with $\mathcal{W}$ a compact space. This assumption implies that all parameterized functions $\psi_w$ are $K$-Lipschitz for some $K\geq 1$. To ensure that all weights lie in a compact space and thus the Lipschitz constraint is preserved, the weights are clipped \cite{wgan_gradient_clipping} to a certain range after each gradient update in the implementation.  
    
\paragraph{Deep Convolutional GAN (DCGAN)}
The deep convolutional GAN has the same base architecture as shown in \cref{fig:vanillaGAN}, but the generator $\phi$ and the discriminator $D$ are convolutional neural networks (CNNs) \cite{dcgan}. These kind of neural networks are especially in the field of image processing successfully applicable \cite{cnn_impact, cnn_explanation}. In order to be able to integrate CNNs into GAN the authors of \cite{dcgan} pointed out which guidelines are to follow to enable a stable training at higher resolution and with deeper architectures. 

The stability of the training is ensured by applying batch normalization \cite{batch_norm} on the output layer of $\phi$ and the input layer of $D$. To work with deeper architectures fully-connected layers \cite{fc_layer} should be avoided on top of convolutional features. Finally, the choice of the leaky rectified linear unit (LReLu) activation function \cite{activation_functions} for $D$ allows higher resolution modeling. Moreover, the generator captures faster the color space of the distribution $\mu$ by applying bounded activation functions in the last layer as the LReLu \cite{activation_functions}. Finally, mentionable is that $\phi$ and $D$ are able to learn their own spatial up- or downsampling by replacing deterministic spatial pooling layers \cite{pooling} with (fractional-) strided convolutions.

\paragraph{Conditional GAN (cGAN)}
By conditioning a GAN framework with additional information it is possible to take the control over the data production process performed by the generator $\phi$ \cite{cgan}. Thereby, additional information can be represented for example by class labels or semantic segmentation masks \cite{semSeg}. As shown in \cref{fig:cGAN} the conditioning can be realized by feeding the supplementary information $\eta$ to the discriminator $D$ and the generator $\phi$ as an extra input channel. During training, $\eta$ is sampled from a data model $\bm{\eta}\sim\nu$, where $\nu$ gives the distribution of $\eta$ in the data generation process.   This extension of the architecture leads to the modified loss function 

    \begin{align}
        \mathcal{L}_{\text{cond.}}(D, \phi) = \mathbb{E}_{{\x\sim \mu\atop\bm{\eta}\sim \nu}}[\log(D(\x|\bm{\eta}))]+\mathbb{E}_{{\z\sim \lambda\atop \bm{\eta}\sim\nu}}[\log(1-D(\phi(\z|\bm{\eta})))]~.
        \label{eq:cgan_loss}
    \end{align}
    
\begin{figure}[t]
    \centering
    \includegraphics[scale=0.8]{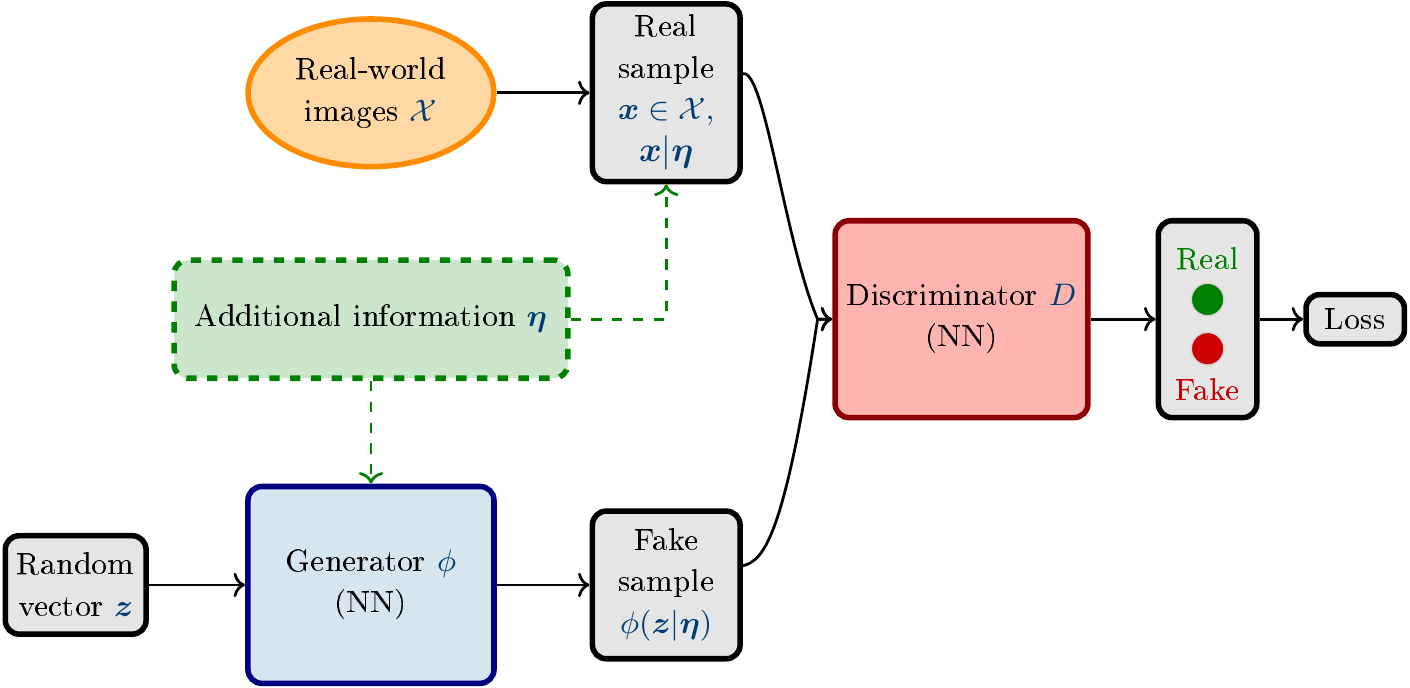}
    \caption{Architecture of a conditional GAN. According to: \cite{cgan}.}
    \label{fig:cGAN}
\end{figure}

A special form of the cGAN investigated in this work is the so called \texttt{pix2pixHD} introduced by \cite{pix2pixHD}. This conditional adversarial framework allows to generate high-resolution photo-realistic images from semantic segmentation masks. The pix2pixHD framework is based on its former version \texttt{pix2pix} \cite{pix2pix} whose optimization problem is given as
\begin{align}
    \min_\phi \max_D \mathcal{L}_{\text{cond.}}(D, \phi, \x)
    \label{eq:pix2pix_optimization_problem}
\end{align}
with $\mathcal{L}_{\text{cond.}}$ defined as in \eqref{eq:cgan_loss}.
To improve the photorealism and the resolution of the generated images the architecture was changed by introducing three innovations. 

First, a coarse-to-fine generator was implemented. For this, the generator was decomposed into the two sub-networks $\phi_1$ having the role of a global generator and $\phi_2$ as a local enhancer. By this the  global and local information can be aggregated effectively within the generator $\phi=\{\phi_1, \phi_2\}$ for the image synthesis task. 

In order for the discriminator to distinguish between generated and real high-resolution images it needs a large receptive field. Therefore, the common discriminator $D$ was replaced by three multi-scale discriminators $D_1, D_2$ and $D_3$ which have an identical network architecture, but operate at three different image scales. Hence, the optimization problem \eqref{eq:pix2pix_optimization_problem} extended to
\begin{align}
            \min_\phi \max_{D_1, D_2, D_3} \sum_{i=1}^3 \mathcal{L}_{\text{cond.}}(\phi, D_i)~.
            \label{eq:pix2pixHD_optimization_problem_multi_scales}
        \end{align}
In particular, a pyramid of images is created during the training by downsampling the input image by factor two and four. Since the discriminator operating on the coarsest scale has the largest receptive field and hence a more global view it is possible to guide the generator producing globally consistent images. Whereas, the discriminator performing on the finest scale is able to make the generator $\phi$ pay attention to finer details during the data production.

Lastly, a feature matching loss $\mathcal{L}_{FM}$ \cite{pix2pixHD} was added to \eqref{eq:pix2pixHD_optimization_problem_multi_scales}
in order to stabilize the training of the \texttt{pix2pixHD} framework. By this, the complete optimization problem is  defined as
   \begin{align}
        \min_\phi \left[ \left(\max_{D_1, D_2, D_3} \sum_{i=1}^3 \mathcal{L}_{\text{cond.}}(\phi, D_i) \right) + \gamma \sum_{i=1}^3 \mathcal{L}_{FM}(\phi, D_i) \right]
    \end{align}
    with $\gamma$ the weighting parameter for both terms.

\section{Preparation of datasets}
\label{sec:datasets}
The datasets used for generative learning are described below. We proceed from the Lorentz attactor as a simple chaotic system to LES simulations of simple and complex turbulent flows.

\subsection{Lorenz attractor}
\label{ssec:lorenz_attractor_dataset}
The Lorenz attractor is a non-periodic, non-linear and deterministic ergodic system which is given by the system of ordinary differential equations \cite{lorenz_1963} :
    \begin{align}
        \frac{dx}{dt} &= \sigma(y-x) \nonumber\\ 
        \frac{dy}{dt} &= x(\rho - z) -y
         \label{eq:lorenz_system}\\
        \frac{dz}{dt} &= xy - \beta z \nonumber
    \end{align}
By \cite{lorenz_attractor_strange_attractore_proof} it has been proven that this dynamic system is representing a strange attractor. Within this hydrodynamic system $x$ describes the rate of convection, $y$  is proportional to the temperature variation between ascending and decreasing flow and $z$ represents the distortion rate of the vertical temperature profile from linearity \cite{lorenz_1963}.

The physical parameters are given by $\sigma$ as the Prandtl number, $\rho$ as the relative Rayleigh number and $\beta$ representing the measure for the cell geometry. 
In this work we use the classic parameter values $\sigma=10$, $\rho=28$ and $\beta=\frac{8}{3}$ \cite{Kuznetsov2020}. 

The training data for the generative learning is given by the points of the attractor's trajectory within the three dimensional space computing the system \eqref{eq:lorenz_system} applying the \texttt{odeint} routine of the python package \texttt{scipy.integrate} which uses the lsoda algorithm \cite{odeint}.  In total $20,000$ data points of $200,000$ trajectories started 
from different initial points $(x_0, y_0, z_0)$ randomly sampled within the ranges $x_0\in [-40,40]$, $y_0 \in [-30,40]$ and $z_0 \in [0,50]$.

\subsection{LES}
The computational fluid dynamics (CFD) results presented in this paper form the basis for GAN training. They were generated using large-eddy simulations (LES). In this approach, the spatially filtered variant of the Navier-Stokes equations is solved, with the computational grid designed to provide a resolution of at least 80\% of the turbulent kinetic energy (TKE) of the flow. The effect of smaller turbulent structures, which are not captured by the grid, are represented using semiempirical models, the so-called subgrid scale models \cite{Hirsch:2007}. The spatial filter is thus implicitly given by the computational grid. 
The LES approach is reasonable, because it is the large vortex structures that transport the bulk of the energy \cite{Ferziger:2008} while the smaller structures can be considered to be mainly isotropic and homogeneous (not in the close vicinity of solid walls) by the assumption of local isotropy according to Kolmogorov \cite{Kolmogorov:1991}, which simplifies their modeling considerably. 
\subsection{Test-cases \& numerical setup}
Two different test cases were chosen for training of GAN, which differ in the complexity of the resulting flow field. Both simulations were performed with the commercial flow solver ANSYS Fluent which was set up to solve the incompressible variant of the spatially filtered Navier-Stokes equations. For time integration, a non-iterative time advancement scheme is used in combination with a fractional step method for pressure-velocity coupling. The advective fluxes are treated by a bounded central scheme in order to introduce as low numerical dissipation as possible to avoid unphysical dampening of small turbulent structures \cite{winhart2020large}. 
\begin{figure}[t]
    \begin{subfigure}[b]{0.35\textwidth}
	    \includegraphics[width=\textwidth]{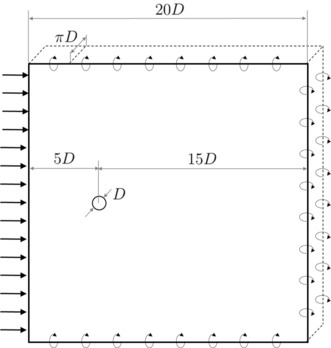}
	    \caption{Flow around a cylinder}
        \label{fig:numericalsetupA}
    \end{subfigure}
    \hfill
	\begin{subfigure}[b]{0.55\textwidth}
		\includegraphics[width=\textwidth]{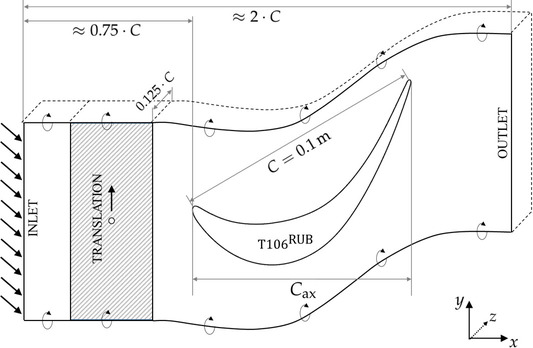}
		\caption{T106 turbine stator}
		\label{fig:numericalsetupB}
	\end{subfigure}
	\caption{Numerical domains for both investigated test cases}
\end{figure}
\subsubsection{Flow around a cylinder}
The first test case is the flow around a cylinder at Reynolds number 3900. This is a widely used test case, which has been studied in great detail in the literature both experimentally \cite{Parnaudeau:2008,Norberg:1994,Ong:1996} and numerically \cite{Parnaudeau:2008,Beaudan:1995,Kravchenko:2000}. The flow field in this case is characterized by a K\'arm\'an vortex street, that forms in the wake region of the cylinder and consists of the typical coherent vortex system, where the axis of rotation of the individual vortices is parallel to the axis of the cylinder.
A schematic representation of the numerical domain is shown in Fig. \ref{fig:numericalsetupA}. The computational grid consists of a total of 15 million cells. The time step was chosen so that the CFL number was on the order of unity, and the simulation was run for a total of 
$25,000$ time steps after initial transient effects had disappeared, which corresponds to a total physical time period of approximately $1.45$ seconds.
\subsubsection{T106 turbine stator under periodic wake impact}
\label{sssec:lpt_dataset}
The second test case is an academic low-pressure turbine (LPT) stator under periodic wake impact. In this configuration, the wakes, which are comparable to those of the cylinder test case described above, are artificially generated by means of an upstream mounted rotating bar grid.  The wakes are convected into the stator passages where deformation occurs as a consequence of the flow turning within the passage. Furthermore, a complex interaction between the wakes and the periodically detaching boundary layer takes place in the rear region of the suction side of the LPT stator, which in total makes this test case an interesting demonstrator for complex turbulent interaction phenomena. 
A schematic representation of the numerical domain is shown in Fig. \ref{fig:numericalsetupB}. The computational grid consists of a total of approx. 72 million elements. The time step was chosen so that the CFL number was on the order of unity, and the simulation was run for a total of $22,500$ time steps after initial transient effects had disappeared, which corresponds to 10 bar passing periods or approx.\ \unit[$1.43\times10^{-3}]{s}$.
\subsection{Data sets and data production}
The data sets used for training the GAN were generated by post-processing the transient LES velocity field data. In this process, grayscale images are generated via a projection mapping in the sense of \eqref{eq:mean_ergodic_theorem_cont_reduced}. In the case of the  flow around a cylinder experiment, the gray scale is showing the distribution of the absolute deviation of the local fluctuating velocity magnitude $c(\xi,t)=\sqrt{V_x(\xi,t)^2+V_y(\xi,t)^2+V_z (\xi,t)^2}$ at the location $\xi$ from its time average 
\begin{equation}
c'(\xi,t) = |c(\xi,t) - \overline{c}(\xi)|,~~\overline{c}(\xi)=\frac{1}{T}\int_0^Tc(\xi,t) \,\mathrm{d}t.
\end{equation}
 Here $V(\xi,t)=(V_x(\xi,t),V_y(\xi,t),V_z(\xi,t))$ is the velocity field of the fluid. As the moving wake determines the turbulent flow field in the case of the LPT turbine,  time averaging at a fixed point in this case does not make much sense. Therefore, a different representation of the turbulence (or projection mapping) is chosen, which simply depicts the velocity component perpendicular to the image, $w(\xi,t)$.   Figure \ref{fig:examplepictures} shows an example image for each of the two test cases examined. The gray scale for $w(\xi,t)\approx0$ is found in the upper left corner of the right panel. Negative values for $w(\xi,t)$ are shown in lighter and positive values in darker grey. 
 
 Basic parameters of the generated data sets are summarized in \cref{tab:datasetparameters}. The time step interval between two successive frames is chosen so that the respective snapshots are sufficiently far apart in time to minimize the correlation between the individual frames.
\begin{figure}[tbh]
    \begin{subfigure}[b]{0.49\textwidth}
	    \includegraphics[width=\textwidth]{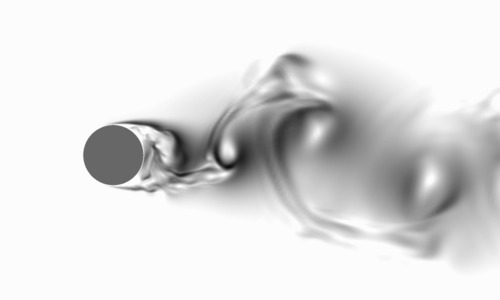}
	    \caption{Flow around a cylinder}
        \label{fig:examplecylinder}
    \end{subfigure}
    \hfill
	\begin{subfigure}[b]{0.47\textwidth}
		\includegraphics[width=\textwidth]{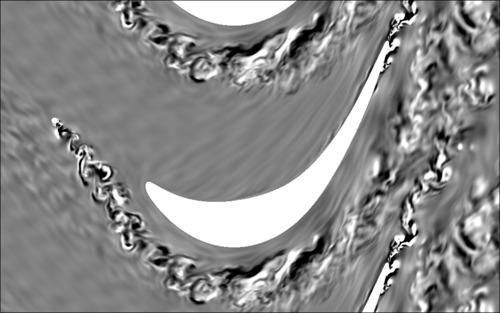}
		\caption{T106 turbine stator}
		\label{fig:exampleturbine}
	\end{subfigure}
	\caption{Example snapshots for both investigated test cases extracted from the LES.}
	\label{fig:examplepictures}
\end{figure}
\begin{table}[tbh]
	\normalsize
	\centering
	\renewcommand\cellset{\renewcommand\arraystretch{.5}%
    \setlength\extrarowheight{5pt}}
    \renewcommand\cellgape{\Gape[2pt]}
	\caption{Summary of the main data set parameters.}
	\label{tab:datasetparameters}
	\begin{tabular}{lrrrr}
		\toprule
		& \textbf{\makecell{Sampling \\ frequency}} & \textbf{\makecell{Image \\ resolution}} & \textbf{\makecell{Number \\ of files}} & \textbf{Total size} \\
		\midrule
		Cylinder & $\unit[68.9]{kHz}$ & \unit[1000 x 600]{px} & $5,000$ & $\unit[527]{MB}$ \\	
		Turbine & $\unit[40.5]{kHz}$ & \unit[1000 x 625]{px} & $2,250$ & $\unit[700]{MB}$ \\
		\bottomrule
	\end{tabular}
\end{table}
\subsection{Computational cost}
At this point, the computational effort of the simulations presented in this paper should be briefly discussed, as this is the main criterion for the applicability of such scale-resolving simulations. 

All simulations presented were performed on the in-house High-Performance Computing (HPC) cluster of the Chair of Thermal Turbomachines and Aero Engines, whose main specifications are summarized in \cref{tab:hpc}. 

In total 20 computational nodes of the \#1 partition of the HPC cluster were allocated in both runs, resulting in a total number of 560 CPU cores. 
In the case of of the flow around a cylinder, this resulted in a total computation time of about 
one day for the output run consisting of 
$25,000$ iterations, which corresponds to about 
72 core weeks.
In the case of T106 LPT stator, the calculation time was approx. 8 days for the output run consisting of $2,250$ time steps, which corresponds to 10 bar passings, i.e.\ approx.\ 640 core weeks. 

\begin{table}[tbh]
	\small
	\centering
	\renewcommand\cellset{\renewcommand\arraystretch{.5}%
    \setlength\extrarowheight{5pt}}
    \renewcommand\cellgape{\Gape[2pt]}
	\caption{Summary of the main specifications of the HPC cluster.}
	\label{tab:hpc}
	\resizebox{\textwidth}{!}{
	\begin{tabular}{cccccc}
		\toprule
		\textbf{Partition} & \textbf{\makecell{Number \\ of nodes}} & \textbf{\makecell{Cores \\ per node}} & \textbf{\makecell{CPU type}} & \textbf{\makecell{RAM}} & \textbf{Interconnect} \\
		\midrule
		\#1 & 28 & 28 & \makecell{Intel Xeon "Skylake" \\Gold 6132 @\unit[2.6]{GHz}}& \unit[96]{GB} & \makecell{Intel \\Omni-path}\\
		\#2 & 8 & 40 & \makecell{Intel Xeon Scalable \\Gold 6248 @\unit[2.5]{GHz}}& \unit[96]{GB} & \makecell{Intel \\Omni-path} \\
		\midrule
		\textbf{TOTAL} & \textbf{36} & \textbf{1104} &  & \textbf{\unit[3.4]{TB}} & \\
		\bottomrule
	\end{tabular}}
\end{table}

\section{Setup and configuration of GAN training}
\label{sec:gan_configs}
The implementations details of the training with the GAN frameworks introduced in \cref{ssec:GAN_frameworks} are summarized for the different datasets in the following. All GANs were set up and trained using the \texttt{PyTorch}  \cite{pytorch2019}.

\subsection{Lorenz attractor}
\label{ssec:gan_training_lorenz_attractor}
The Lorenz attractor was trained by a  original GAN with a discriminator consisting of four fully connected hidden layers \cite{hagan_neural_networks} with $1024, 512, 256$ and $64$ neurons. Since the attractor is a deterministic ergodic system \cite{lorenz_1963} Gaussian noise was added to the network of the discriminator as well as to the real input data to regularize the training and hence reduce overfitting \cite{asatryan2020convenient, overfitting_1995}. 
The real data representing the training data is given by the points of the attractor's trajectory within the three dimensional space as described in \cref{ssec:lorenz_attractor_dataset}.

The generator is also given by a fully connected neural network composed of three hidden layers with $256, 512$ and $1024$ neurons. Its input is given by a random vector of dimension $100\times 1$ whose elements come from the standard normal Gaussian distribution.

Both neural networks $\phi$ and $D$ apply the ReLu activation function for the input and hidden layers. The activation of the output layer of the discriminator is given by a sigmoid function and for the generator by a linear function.

The GAN framework was trained for $200,000$ epochs with a batch size of $20,000$. Hence, the trajectory consisting of $20,000$ data points was regarded during one epoch whereby the trajectory started from different randomly sampled initial points $(x_0, y_0, z_0)$ lying in the ranges $x_0\in [-40,40]$, $y_0 \in [-30,40]$ and $z_0 \in [0,50] $. 

The optimization problem was given as in \eqref{eq:vanilla_gan_optimization_problem}. To update the weights of the neural networks $\phi$ and $D$ the Adam optimizer \cite{kingma2014adam} was applied with the parameter $\beta_1=0.9$, $\beta_2=0.999$ and a learning rate of $2 \times 10^{-4}$. Here, only half of the batch size was used to update the weights of the discriminator. 

\subsection{Flow around a cylinder}
\label{ssec:gan_training_cylinder_cross_flow}
Experiments have been performed on this dataset using the  original GAN, WGAN and DCGAN framework. 
For the  original GAN and WGAN the discriminator is given by a fully connected neural network with five layers in total whereby the hidden layers consist of $1024, 512$ and $256$ neurons. The generator of both GAN frameworks also consists of five fully connected layers in total with the number of $256, 512$ and $1024$ neurons for the hidden layers.
In exception of the output layer the Leaky ReLu is applied as activation function. The last layer of the generator is activated by the hyperbolic tangent function. For the original GAN the discriminators last layer is activated by the sigmoid function and the linear activation function is used in case of the WGAN.
For the training of the DCGAN the architecture suggested by \cite{dcgan_github} was adopted. 

The three investigated GAN frameworks take images of size $k \times k$ as input. In our experiments we investigated the training with $k \in \{64, 128, 256, 512\}$. We trained all GAN for $200$ epochs with a batch size of $20$ using $5,000$ images of the dataset. For further investigations the DCGAN training was continued up to epoch $2,000$. The input vector of the generator consists of $100$ elements randomly sampled of the standard Gaussian distribution.  

For the update of the weights, the Adam optimizer is applied in case of the  original GAN and DCGAN with the parameter settings $\beta_2=0.5$ and $\beta_2=0.999$ and a learning rate of $2 \times 10^{-4}$ is used. For the WGAN the weight update is realized by the optimizer RMSProp \cite{ruder2016overview} with a learning rate of $5 \times 10^{-5}$  whereby the weights are clipped to the range $[-0,01, 0.01]$.

\subsection{T106 turbine stator under periodic wake impact}
\label{ssec:gan_training_lpt}
The DCGAN has been also trained for $2000$ epochs and $k=512$ on the whole dataset of the wake disturbed turbine stator-row with the parameter settings described as in \cref{ssec:gan_training_cylinder_cross_flow}.

Moreover, the \texttt{pix2pixHD} has been trained as second GAN framework with this dataset. As described in \cref{ssec:GAN_frameworks} the \texttt{pix2pixHD} is a conditional GAN and hence incorporates additional information to the training. Here, this supplementary information $\eta$ is given by the binary segmentation masks shown in \cref{fig:pix2pixHD_masks}. In terms of conditional GAN-learning \eqref{eq:cgan_loss}, this corresponds to a uniform distribution $\bm{\eta}\sim\nu_{\text{unif.}}$ over the $y$ coordinate of the wake. For the experiments with the \texttt{pix2pixHD} the implementation of \cite{pix2pixHD} has been used with small changes. To avoid the appearance of artifacts in the data synthesized by $\phi$ we replaced the reflection padding with a replication padding and add a replication padding to the global generator before the convolution during the downsampling procedure.

Contrary to the DCGAN framework it is possible to train the \texttt{pix2pixHD} on images of size $k \times k',~ k\neq k'$. The only important thing to take care of is that $k$ and $k'$ are divisible by $32$. For this reason, the images were resized for the training to size $k \times k' = 992 \times 624$, such that the aspect ratio has been preserved. 

Since the GAN is trained in a conditioned fashion the binary masks are also needed during the inference. For this reason, the dataset was split into a training- and test set. The training set contains the first $2000$ images of the whole dataset and the test set consists of the remaining $250$ images.   

The \texttt{pix2pixHD} has been trained for $200$ epochs with a batch size of $10$. Analogous to the DCGAN the weights were updated by the Adam optimizer with the parameter $\beta_1=0.9$, $\beta_2=0.999$ and a learning rate of $2 \times 10^{-4}$. 
\begin{figure}[t]
    \centering
    \includegraphics[width=\textwidth]{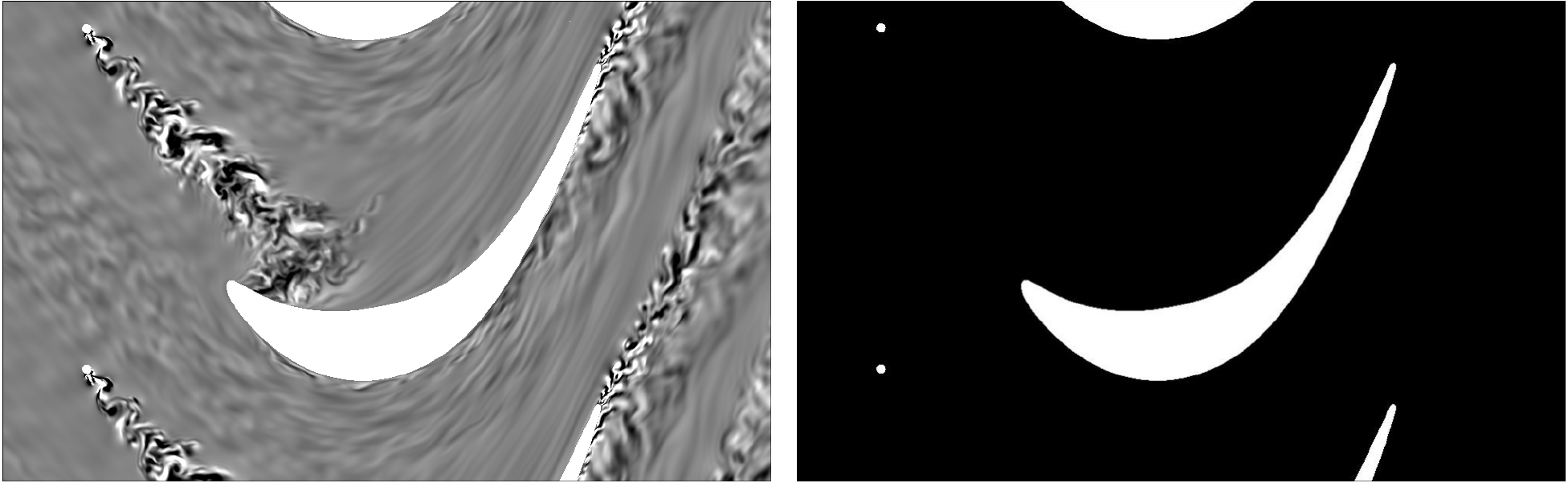}
    \caption{An image of the training set (left) and its corresponding binary segmentation mask (right).}
    \label{fig:pix2pixHD_masks}
\end{figure}

\begin{figure}[t]
    \centering
    \includegraphics[width=\textwidth]{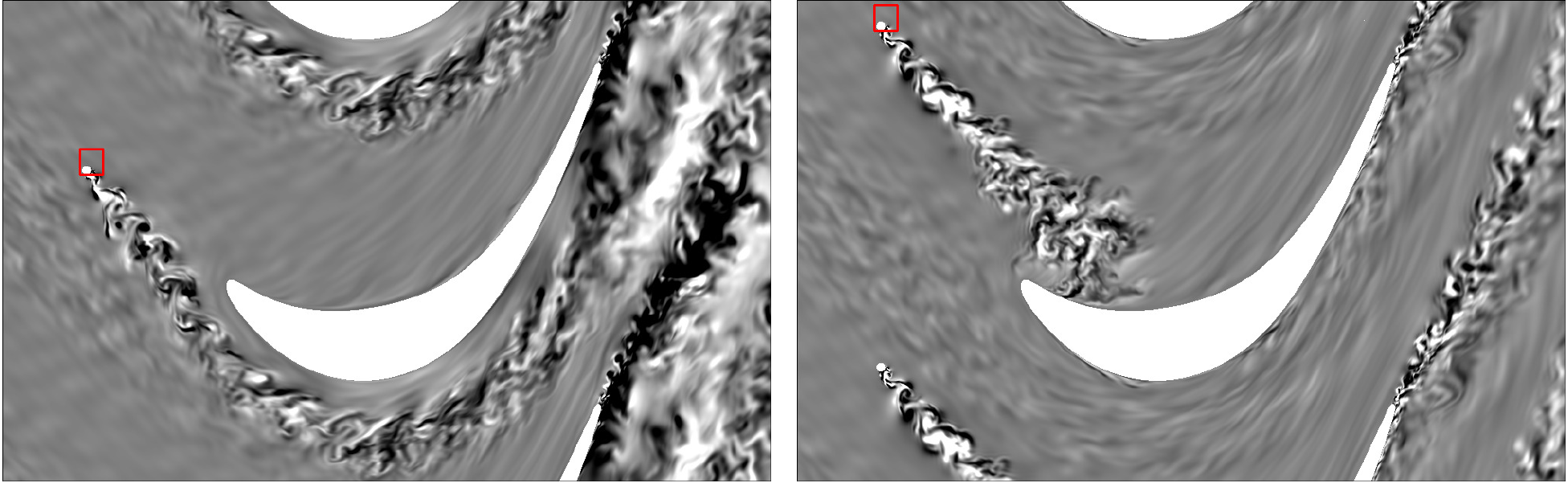}
    \caption{The regions of the wake positions (marked by red boxes) excluded in data split T1 (left) and T2 (right).}
    \label{fig:data_generalization}
\end{figure}

\subsection{Generalization over changes in geometry with the conditional GAN}
\label{ssec:data_generalization}

In order to show the ability of the conditional GAN to generalize over changes of geometry the training of the pix2pixHD has been performed using the same parameter settings as described in \cref{ssec:gan_training_lpt} but considering two further data splits. These data splits are constructed in such way that the wake positions given in the test data are not included in the training data. Thus, images have been excluded from the training data where the wake is located in a certain region. As described in \cref{sssec:lpt_dataset} the dataset of the turbine stator consists of $2,250$ images corresponding to $10$ bar passing periods. From each of these periods we excluded about $5\%$ of the images from the training representing the test data.
\Cref{fig:data_generalization} shows the regions of the wake positions which has been excluded from every period in the respective data splits. 

\section{Results of experiments}
\label{sec:results}
The results of the numerical experiments are presented and discussed in this section.
In the following, we refer to the process of applying a trained generator to the latent random vector $z$ as inference. At inference time, the latent vector also consists of $100$ elements sampled from the standard normal distribution. 

\subsection{Lorenz attractor}
As described in \cref{ssec:gan_training_lorenz_attractor} we trained a  original GAN for $200,000$ epochs in order to synthesize three dimensional data points which come from a trajectory of the Lorenz attractor that has converged towards the strange attractor. For consistency, a trajectory of $20,000$ real data points is considered at inference time as in the training. To get a better overview of the results, $500$ data points produced by the trained generator $\phi$ are shown in \cref{fig:lorenz_front}. It can be observed that the generated data points are on or close to the true trajectory of the Lorenz attractor. For the points that do not seem to lie directly on the trajectory, it has to be taken into account that the trajectory shown here is also not very dense due to the small number of data points.  
Considering randomly sampled real data points of a trajectory consisting of one million data points as it must be noted that the distribution is similar to the one of the synthesized data points. Moreover, it can be seen from the rotated \cref{fig:lorenz_side} that, apart from a few outliers, the generated data points are all located in the area of the trajectory in three-dimensional space.
\begin{figure}[htb]
    \centering
    \includegraphics[trim=0mm 0mm 0mm 15mm, clip, scale=0.65]{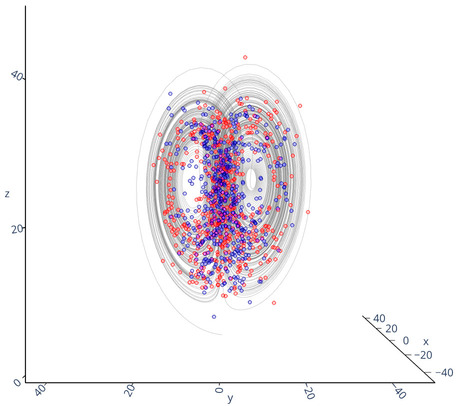}
    \caption{The $y$-$z$ plane perspective of a trajectory from the Lorenz attractor consisting of $20,000$ data points starting from the initial point $(x,y,z)=(0.1,0,0)$ (grey), $500$ synthesized data points (blue) and $500$ real data points randomly sampled from a trajectory consisting of one million data points (red).}
    \label{fig:lorenz_front}
\end{figure}

\begin{figure}[htb]
    \centering
    \includegraphics[width=\textwidth]{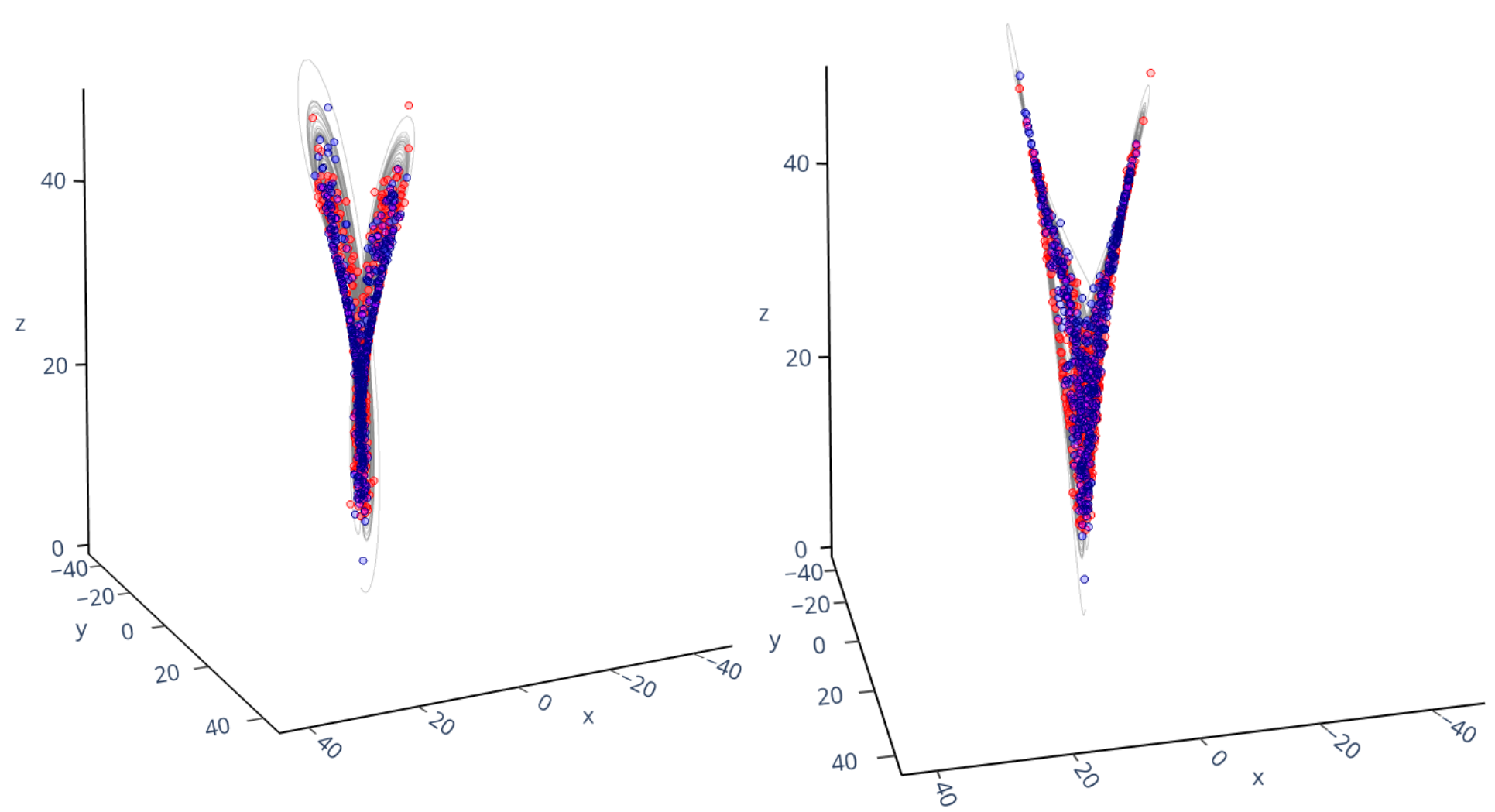}
    \caption{Rotated perspective from the trajectory of real data points (grey), synthesized data points (blue) and randomly sampled real data points (red) given in \cref{fig:lorenz_front}.}
    \label{fig:lorenz_side}
\end{figure}
\begin{figure}[t]
    \centering
    \includegraphics[width=\textwidth]{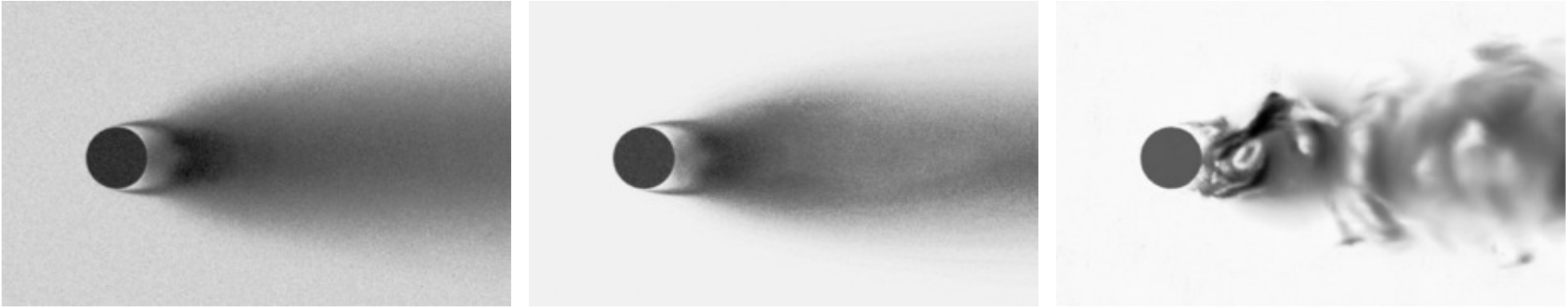}
    \caption{Comparison of images synthesized by the generator of a original GAN (left), WGAN (middle) and DCGAN (right) after $200$ epochs trained on $5,000$ images of size $k \times k$, $k=512$.}
    \label{fig:comparison_karman_various_GAN}
\end{figure}

\subsection{Flow around a cylinder}
\label{ssec:results_karman}
In order to generate the K\'arm\'an vortex street, GAN frameworks with a simpler architecture have been considered first, namely the original GAN and the WGAN. 
As to observe in \cref{fig:comparison_karman_various_GAN} the trained generators of both GAN are able to position the cylinder in the right place after $200$ epochs and that they try to synthesize the wake vortex. However, neither the original GAN nor the WGAN can capture the concrete structure of the vortex street. In addition, it is to observe that the color space has not been learned appropriate by the original GAN such that the generated images are significantly darker than the original images from the LES
(see \cref{fig:examplecylinder}). To address these issues, another GAN framework has been considered whose generator and discriminator are represented by convolutional neural networks. 
As already described in \cref{ssec:GAN_frameworks}, CNNs can be used particularly successfully in image processing. In our experiments, we also found that the DCGAN was able to capture the flow structures after $200$ epochs in contrast to the original GAN and the WGAN (see \cref{fig:comparison_karman_various_GAN}).
To increase the quality of the synthesized images the DCGAN has been further trained out to epoch $2,000$ (see \ref{sec:appendix} for the training progress). 
Based on \cref{fig:results_karman}, it can be seen that the images produced by the generator of the DCGAN hardly differ from the real images from the LES after $2,000$ epochs of training. 

Finally, it should be mentioned that the networks have been trained on images of size $k\times k$. It has been observed in our experiments that the quality of the generated images have been significantly better with increasing image resolution at inference time. Therefore, we present here the results for the training with images of size $512 \times 512$. 

\begin{figure}[t]
    \centering
    \includegraphics[width=\textwidth]{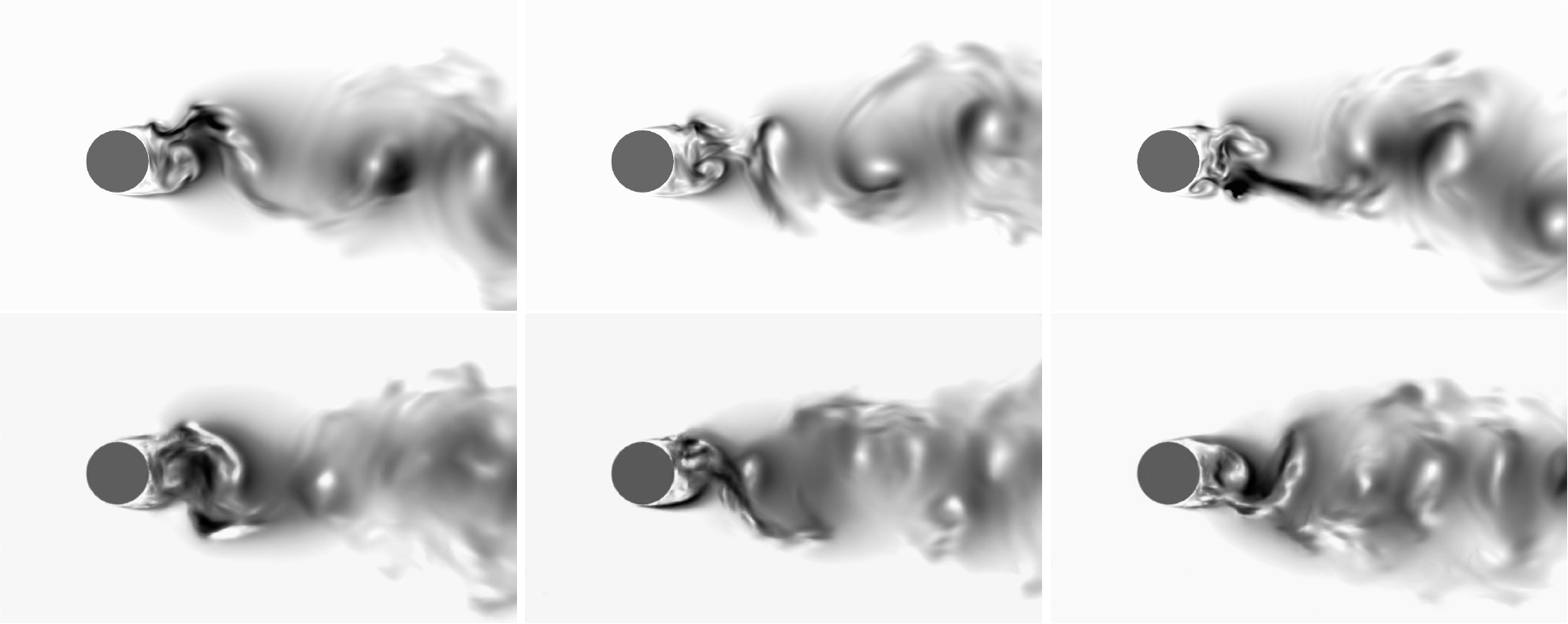}
    \caption{Comparison of images from the LES (top) and synthesized Karman vortex streets produced by the generator $\phi$ of the DCGAN trained over $2,000$ epochs (bottom).}
    \label{fig:results_karman}
\end{figure}

\subsection{T106 turbine stator under periodic wake impact}
\label{ssec:results_lpt}
Since we got impressive results from the DCGAN for the flow around a cylinder, we trained this GAN framework under the same parameter settings for the second test case. As we observe in \cref{fig:results_lpt_dcgan_fails}, the LPT stator has been correctly positioned and the structure of the vortex flows has been also reasonably captured. However, at inference time, the generator has massive problems correctly capturing the position of the cylinder as it periodically slides from bottom to top over time. Especially by direct comparison in \cref{fig:comparison_lpt_dcgan_fails} we can observe, that the structures in the background are not properly captured and the synthesized images are significantly darker than the real images of the LES. 
To address these problems of the DCGAN we considered the \texttt{pix2pixHD} as another GAN framework. In order to have control over the position of the cylinder at inference, we feed binary segmentation masks shown in \cref{fig:pix2pixHD_masks} as additional information $\eta$ to the GAN framework during training and at inference time (see \cref{ssec:GAN_frameworks}). These masks have the information about the position of the cylinder and the LPT stator. 
Moreover, we are allowed to generate high resolution images by the \texttt{pix2pixHD} framework such that the structure in the background of the images should also be preserved.

As shown in \cref{fig:results_lpt}, using the  generator from \texttt{pix2pixHD} we were able to generate images which again can be hardly qualitatively distinguished from the real image from the LES on a visual level after only $200$ epochs (see \ref{sec:appendix} for the training progress). It is also noticeable that the wake vortices do not look identical. Hence, the generator did not simply memorize the structure of the wake vortices at the respective positions and thus variation is given in the synthesized data.

\begin{figure}[t]
    \centering
    \includegraphics[width=0.95\textwidth]{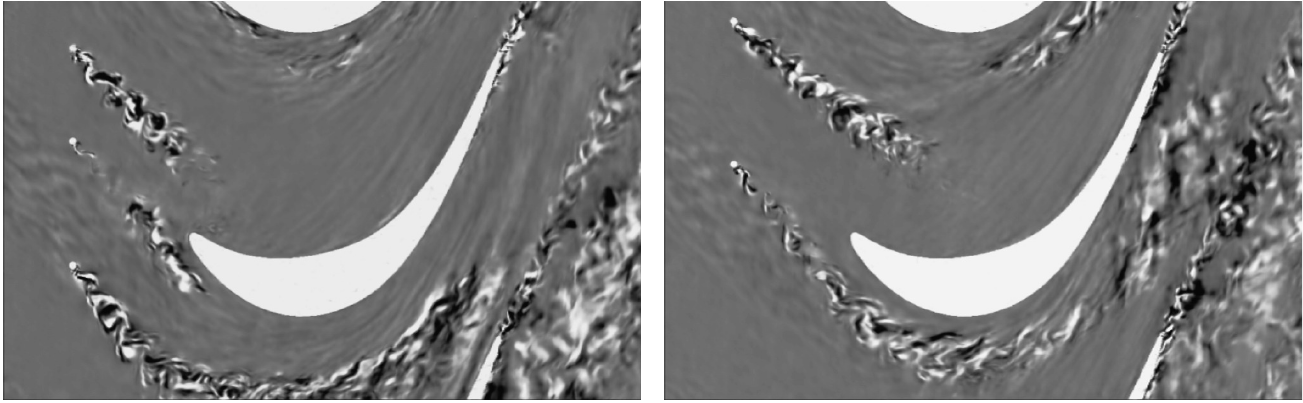}
    \caption{Examples of images synthesized by $\phi$ of the DCGAN trained on $2,250$ images for $2,000$ epochs.}
    \label{fig:results_lpt_dcgan_fails}
\end{figure}

\begin{figure}[t]
    \centering
    \includegraphics[width=0.95\textwidth]{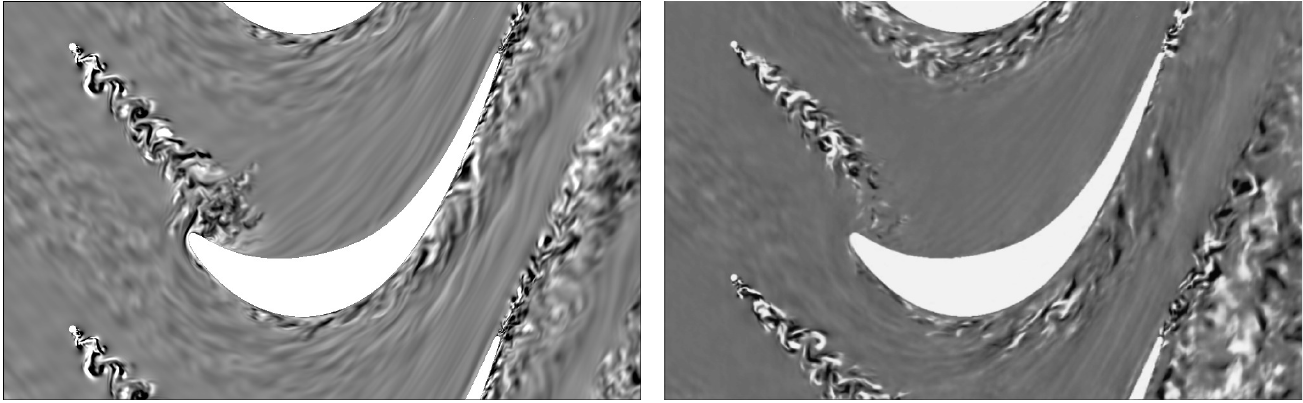}
    \caption{Comparison of a real images from the LES (left) and an images synthesized by $\phi$ of the DCGAN trained on $2,250$ images for $2,000$ epochs (right).}
    \label{fig:comparison_lpt_dcgan_fails}
    \end{figure}

\begin{figure}[!h]
    \centering
    \includegraphics[width=\textwidth]{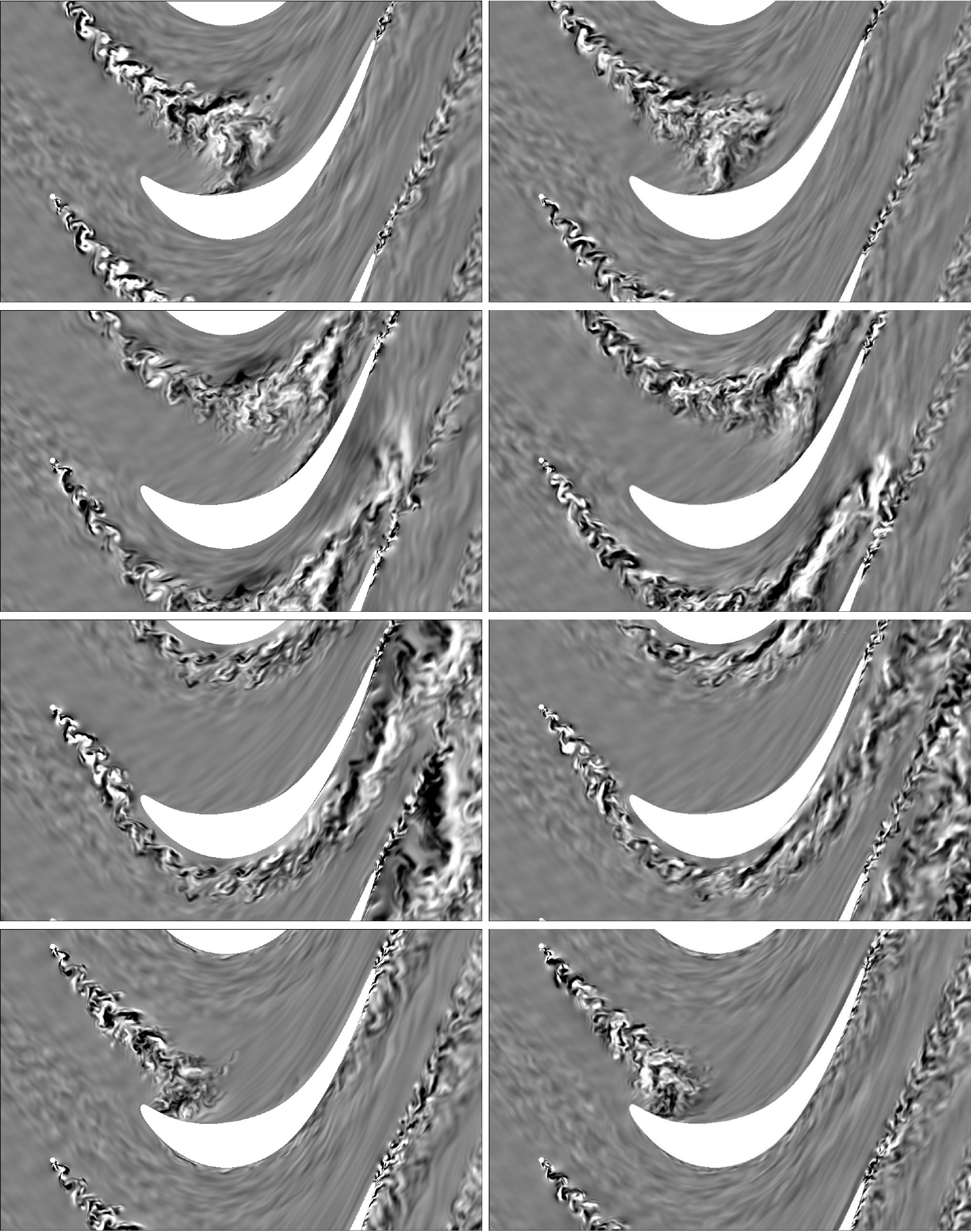}
    \caption{Comparison of images from the LES (left) and synthesized turbulences under periodic wake impact produced by the generator $\phi$ of the pix2pixHD trained over $200$ epochs (right).}
    \label{fig:results_lpt}
\end{figure}

\subsection{Generalization over changes in geometry}

As described in \cref{ssec:results_lpt} we are able to control the position of the wake at inference time by the conditional GAN framework \texttt{pix2pixHD}. After observing that visually high quality turbulent flow fields can be generated by this framework, we went a step further in additional experiments and investigated the ability of the \texttt{pix2pixHD} to generate turbulent flow fields for positions of the wake not considered in the training. Therefore, we trained the GAN framework  using the two data splits introduced in \cref{ssec:data_generalization}. At inference time, the generator was applied to binary segmentation masks with wake positions unseen during the training. In  \cref{fig:results_generalization_split1} and \cref{fig:results_generalization_split2} we observe that the generator managed this task and was able to produce turbulence of high quality for the unknown wake positions for both data splits. Furthermore, the generated wake vortices do not look the same and naturalistic variation is given in the synthesized data. 

\begin{figure}[!h]
    \centering
    \includegraphics[width=0.97\textwidth]{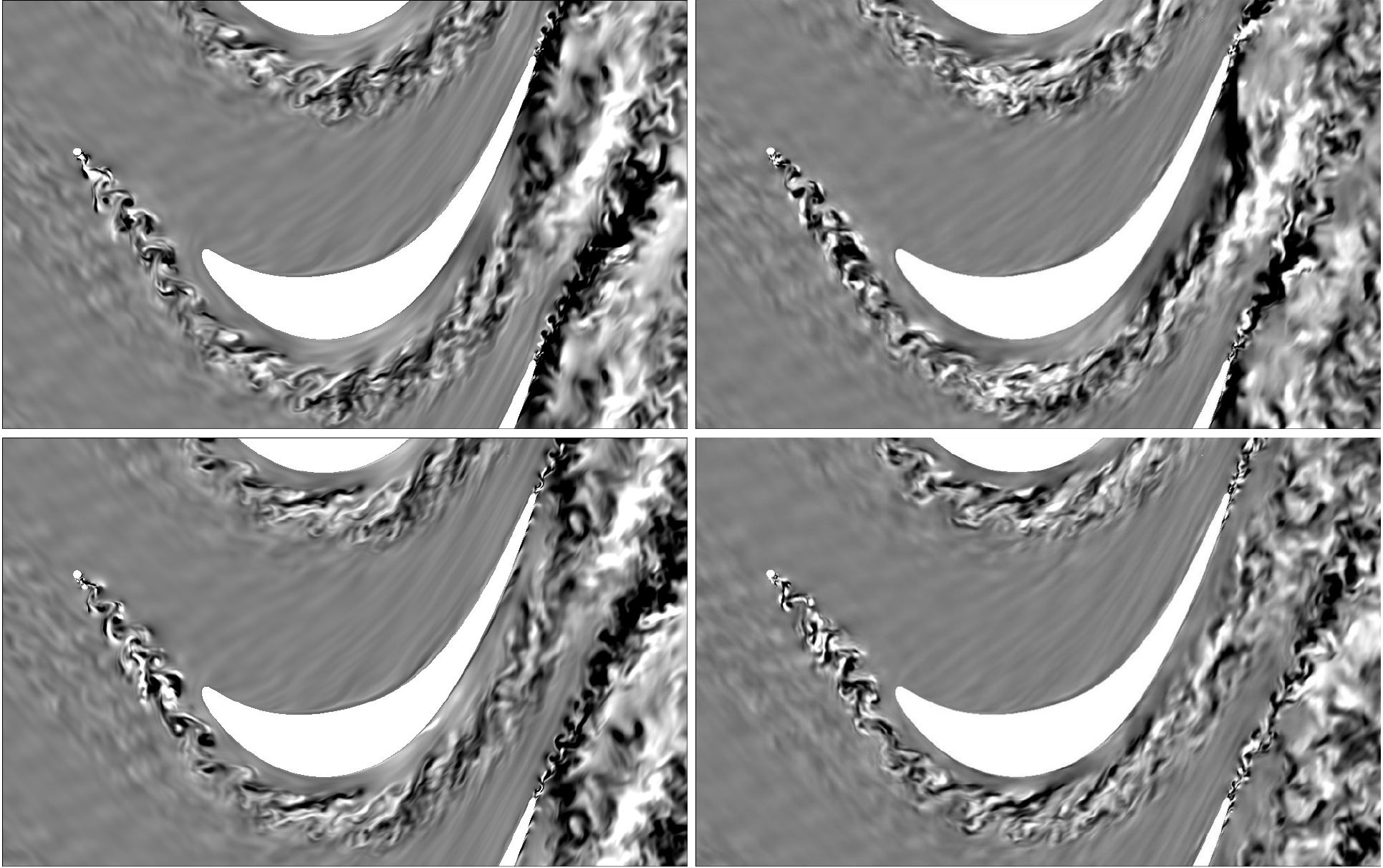}
    \caption{Comparison of images from the LES (left) and synthesized turbulences under periodic wake impact produced by the generator $\phi$ of the pix2pixHD trained over $200$ epochs (right) using data split T1. }
    \label{fig:results_generalization_split1}
\end{figure}

\begin{figure}[!h]
    \centering
    \includegraphics[width=0.97\textwidth]{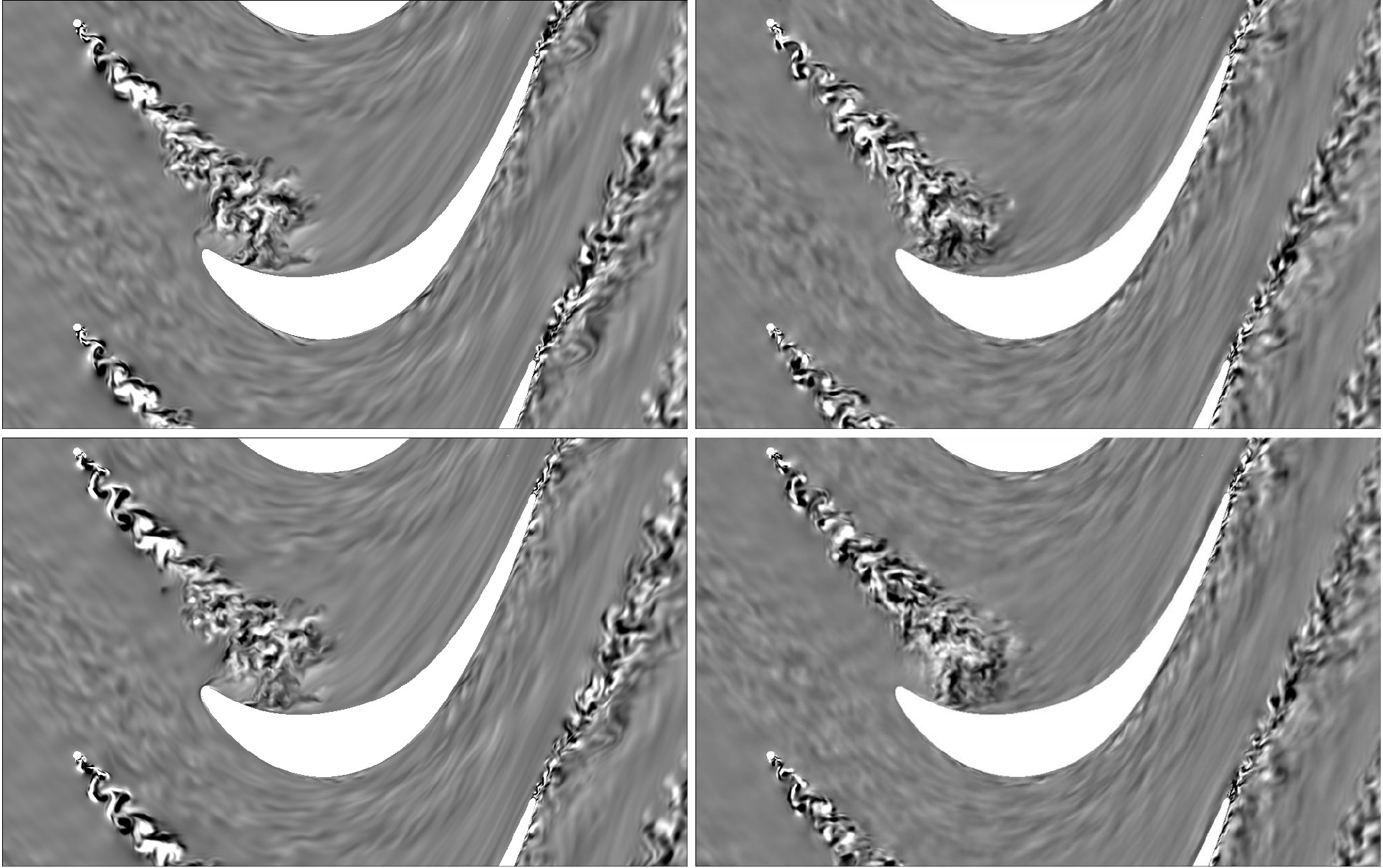}
    \caption{Comparison of images from the LES (left) and synthesized turbulences under periodic wake impact produced by the generator $\phi$ of the pix2pixHD trained over $200$ epochs (right) using data split T2. }
    \label{fig:results_generalization_split2}
\end{figure}

\subsection{Comparison of Computational Costs}
Finally, the computational costs of the training and inference performed on a GPU of type Quadro RTX 8000 with $48$ GB of the successful GAN frameworks are reported in this section. 

The training of the DCGAN with $5,000$ images of the dataset showing the flow around a cylinder has taken  $1.5$ minutes per epoch. The computational time of pure inference is given by $0.001$ seconds per image. Thus, the production of a dataset containing $5,000$ images would take with the beforehand trained generator about $5$ seconds. This leads to a tremendous amount of time saved compared to one day needed for the generation of the images by the LES.

Since the \texttt{pix2pixHD} has a much more complex architecture than the DCGAN the training of one epoch with $2,000$ images has taken $17$ minutes. However, the computational time of pure inference is also given by only $0.01$ seconds per image. Hence, the production of $2,250$ images of the LPT stator under periodic wake impact would take  about $\unit[22.5]{s}$  at inference. Thus, the saved computational time for the data production  is very significant in comparison to $8$ days for the LES. 

\section{Physics-based evaluation}
\label{sec:physics_based_evaluation}

In this section, we compare the statistical properties of GAN-generated turbulence patterns with those of the LES original. We note that all quantities evaluated here can be cast in the abstract form $\mathbb{E}_ {\x \sim\mu}[\psi(\x)]$ of \eqref{eq:mean_ergodic_theorem_cont} and \eqref{eq:mean_ergodic_theorem_cont_reduced}, respectively, with  specific evaluation functions $\psi$. Theorem \ref{theo:learning_ergodic} can be seen as a proof that the evaluation of such physical quantities on GAN-generated data in average and in the limit of large data and large network capacity will be arbitrarily close to the corresponding evaluation on the data generated by the original ergodic system. 

This statement is a consequence of the well-known fact from information theory that convergence with respect to the norm of total variation is equivalent to convergence with respect the Jensen-Shannon divergence \eqref{eq:Jenson-Shannon}.  For the convenience of the reader we provide the details in \ref{sec:appendix-B}. From Theorem \ref{theo:learning_ergodic} it thus follows that
\begin{equation}
    \label{eq:total-variation}
    \|\hat \phi_{T*}\lambda -\mu\|_{\textup{TV}}=\sup_{\|\psi\|_\infty\leq 1}\left|\mathbb{E}_{\x\sim\hat \phi_{T*}\lambda}[\psi(\x)]-\mathbb{E}_{\x\sim\mu}[\psi(\x)]\right|\to 0,  
\end{equation}
holds almost surely as $T\to\infty$, with $\|\psi\|_\infty=\sup_{x\in [0,1]^d}|\psi(x)|$. Hence, the convergence of any statistics of the flow field evaluated on GAN data converges to the corresponding statistics evaluated on LES data and this convergence is even uniform over all uniformly bounded functions $\psi$. Therefore, in contrast to \cite{kim2021unsupervised}, we do not include any loss functions that are directly related to statistical evaluation to our GAN training procedures, but only evaluate in as much our numerical experiments comply with the theoretical insight presented.

As already mentioned at the end of \cref{sec:foundations_of_gan_4_ergodic_systems}, this theoretical result relies on assumptions that  are only approximately fulfilled in practical applications. We therefore provide certain examples for physically meaningful evaluation functions $\psi$ and provide numerical tests for these. 

We start with an evaluation of the variance of the local fluctuating velocity magnitude $c(\xi,t)$ at point $\xi$ in case of the flow around a cylinder. Here the data $x$ contains the (normalized) values of $c(\xi,t)$ with $\xi(t)=\varphi_t(x(0))$ from a grid of pixels and $t$ from certain time slices described in section \ref{sec:datasets}. For $\psi(x)=x_\xi^2$ we evaluate the squared pixel value at pixel $\xi$. Note that this function is bounded on the normalized data and $\textup{Var}[c(\xi,t)]=\frac{1}{T}\int_0^Tc'(\xi,t)^2\,\textup{d} t=\mathbb{E}_{\x\sim\mu}[x_\xi^2]$ measures the strength of statistical fluctuation of velocities at the point $\xi$.

\begin{figure}[htb]
    \begin{subfigure}[b]{0.5\textwidth}
	    \includegraphics[width=\textwidth]{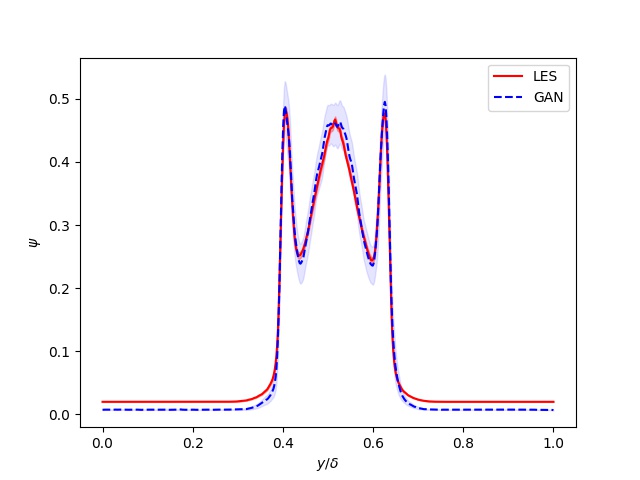}
	    \caption{Mean pixel values}
        \label{fig:eval_res_mean_small}
    \end{subfigure}%
	\begin{subfigure}[b]{0.5\textwidth}
		\includegraphics[width=\textwidth]{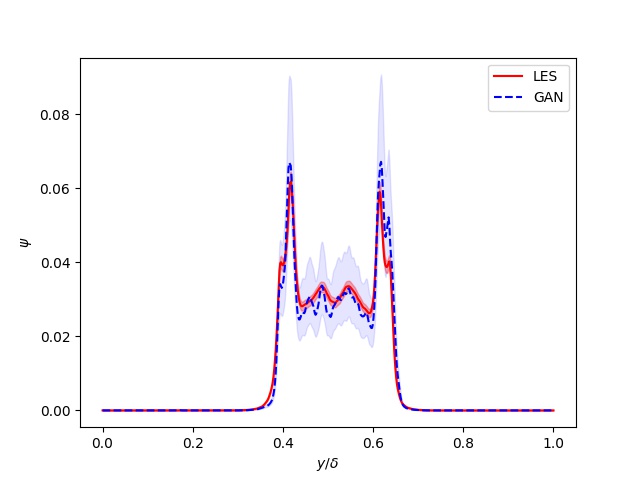}
		\caption{Statistical fluctuation of deviation from mean velocities}
		\label{fig:eval_res_var_small}
	\end{subfigure}
	\caption{Comparison of the mean pixel values (a) and the statistical fluctuation of the deviation from mean velocities (b) for the K\'arm\'an vortex street simulated by LES and generated by GAN along the $y$-axis. The blue and red shaded areas indicate the $95\%$ confidence intervals of the variance for the respective curves. Both data sets were normalized before evaluation.}
	\label{fig:evaluation_result_variance_mean_small_region}
\end{figure}

\Cref{fig:eval_res_var_small} shows the averaged variances over a small grid of $12$ pixels in $x$-direction immediately after the wake along the $y$-axis. As to observe in \cref{fig:examplecylinder} the strongest statistical fluctuation of the velocities is given immediately after the vertebral body. $\delta$ corresponds to the bandwidth of the vortex body, so that by $y/\delta$ the normalized pixel index value of the $y$-axis is defined.
For the dataset of the flow around a cylinder we realized the evaluations with $5,000$ images of the LES used in the training and $5,000$ synthesized images by the GAN. 
Although no physical quantities have been included in the GAN training we can observe that the statistical fluctuations of the velocities of the turbulence from the LES and synthesized by the DCGAN show a similar development and that the confidence intervals of the respective curves are overlapping.  

In addition to the variance we also compared the mean pixel values of the vortex streets produced by the LES and the GAN, i.e. $\psi(x)=x_\xi$.
\Cref{fig:eval_res_mean_small} shows that the GAN approximates the LES almost perfectly in the investigated region for this metric.

Extending the investigated grid and considering about $25\%$ of the pixel values after the cylinder, we can hold our observations and find that in particular the statistical fluctuations of the turbulence generated by the GAN stabilize. The results of these analysis are provided in \cref{sec:appendix-C}.

For the dataset of the turbine stator under periodic wake impact, the synthetic image $\x_\xi$ evaluated at the pixel $\xi$ synthesizes a snapshot of the $z$-component $V_z(\xi,t)$ of the velocity field at fixed time $t$. We compare the correlation of GAN-generated turbulence patterns $\x_\xi=V_z(\xi)$ with those of the LES
\begin{equation}
\label{eq:correlation-coefficient}
    \rho_{V_z, V_z}(p) = \dfrac{\textup{cov}_\mu\left[V_z(\xi(\tau)), V_z(\xi(\tau)+p)\right]}{\sigma_{V_z(\xi(\tau))} \sigma_{V_z(\xi(\tau)+p)}}.
\end{equation}
Here $\xi(\tau)$ stands for a point that is co-moving with the wake and $p$ is a vector that points against the direction of the vector connecting $\xi(\tau)$ with the wake, see Figure \ref{fig:evaluation_correlation_area}. 
Here, $\textup{cov}$ denotes the covariance, $\sigma$ the standard deviation. The co-moving pixel $\xi(\tau)$ is chosen such as it exposes a high level of variation in $V_z(\xi(\tau))$. Note that 
\begin{align}
\label{eq:covariance-vs-psi}
    \begin{split}
      &\textup{cov}_\mu\left[V_z(\xi(\tau)), V_z(\xi(\tau)+p)\right]\\
      =&\mathbb{E}_{\x \sim \mu_\tau}[\psi_{2,\xi(\tau),p}(\x)]\\
      -&\mathbb{E}_{\x \sim \mu_\tau}[\psi_{1,\xi(\tau)}(\x)]\mathbb{E}_{\x \sim \mu_\tau}[\psi_{1,\xi(\tau)+p}(\x)],
    \end{split}
\end{align}
where $\mu_\tau$ is the limiting measure of the ergodic flow conditioned to the wake position at time $\tau$, where we used evaluation functions  $\psi_{1,\xi}(\x)=\x_{\xi}$ and $\psi_{2,\xi,p}(\x)=\x_{\xi}\x_{\xi+p}$. For the representation of the standard deviation via evaluation functions we proceed analogously. From a theoretical standpoint, by application of \eqref{eq:covariance-vs-psi} and a generalization of \eqref{eq:total-variation} to the situation of the rotating wake  we can thus infer the convergence of \eqref{eq:correlation-coefficient} for $\mu$ replaced by $\hat\phi_{T*}\lambda$ to the correlation coefficient of the LES \eqref{eq:correlation-coefficient}. 

We realized the evaluation of correlation coefficients utilizing empirical correlation based on $225$ images of the test set produced by the LES and the GAN framework \texttt{pix2pixHD}. Analogously to the evaluations of the  K\'arm\'an vortex street, we investigated a certain area of the turbulent flow fields described in \cref{fig:evaluation_correlation_area}. \Cref{fig:evaluation_result_correlation} shows that the pointwise correlation of the GAN- generated turbulence coincide with those of the LES within the error bounds given by a 95\% confidence interval.

Thus, we can summarize that the turbulent flow fields synthesized by the GAN frameworks utilized are show similar statistical properties to those of the LES, without taking  physical evaluations into account during the GAN-training.

\begin{figure}[htb]
\centering
\includegraphics[scale=0.8]{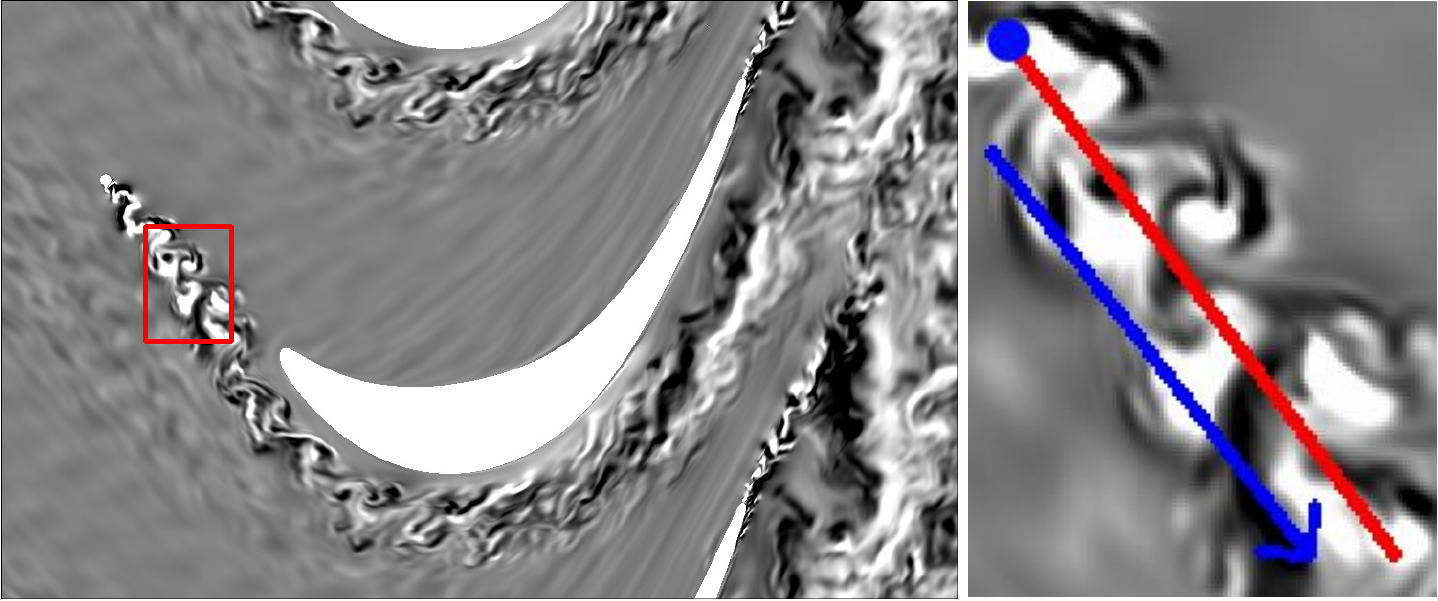}
\caption{Area investigated for the comparison of correlations of the turbulent flow field in case of the turbine stator under periodic wake impact. In the close-up on the right the directory of the vector $p$ is represented by the blue arrow and the pixel $\xi(\tau)$ by the blue point.}
\label{fig:evaluation_correlation_area}
\end{figure}

\begin{figure}[!h]
\centering
\includegraphics[width=0.7\textwidth]{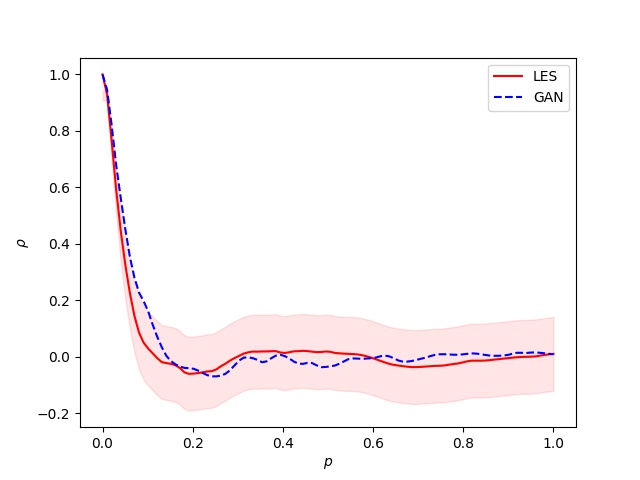}
\caption{Comparison of the pointwise correlation for the turbulent flow fields simulated by LES and generated by GAN along the $x$-axis. The red shaded area indicate the $95\%$ confidence intervals of the correlation for the respective curve.}
\label{fig:evaluation_result_correlation}
\end{figure}

\section{Conclusion and outlook}
\label{sec:conclusion}
We introduced generative adversarial networks as another way to model turbulence. In doing so, we showed that through generative learning it is possible to synthesize turbulence that matches the quality of LES images on a visual level while dramatically reducing computational time. Unlike previous work, we trained the GAN from scratch and only require a randomly sampled noise vector for the data production in the unconditional case. For training and inference of conditional GAN, we also need binary segmentation masks which can be created manually and do not necessarily need to be obtained by simulations. Using conditional GAN, we have found a solution for generating visually high-quality turbulence when solid objects as the rotation wake change position in space. Thus, we have provided a first approach to generalization with respect to spatial changes. Moreover, we have shown that the statistical properties of GAN-generated and LES flow agree excellently by investigating physics-based metrics. Finally, we have also demonstrated that generative learning of ergodic systems also works at the theoretical level.

So far, we have ignored the physics involved. Therefore, the next step is to feed the GAN with physical parameters so that turbulent flows can also be captured by the GAN in a physically correct manner and hence improve the results regarding the statistical properties even more.
Regarding the numerical experiments we will also pay attention to exploring and develping further appropriate evaluation methods. 
Having provided a first approach to generalization in terms of changes in turbulence space, in future work we will also consider how generalization can be realized in terms of geometries and further boundary conditions.

\section*{Acknowledgments}
\noindent C.D. and H.G. thank  Hayk Asatryan, Tobias Riedlinger and Matthias Rottmann for discussions and useful advice. The authors also thank Pascal Post for valuable hints for the literature research. We also thank two anonymous referees for hints that helped to improve this paper.

\appendix
\section{Training history of the GAN frameworks}
\label{sec:appendix}
The training progress of the experiments with the DCGAN discussed in \cref{ssec:results_karman} is described in \cref{fig:training_progress_karman}. Since we trained the GAN framework on images of size $ 512 \times 512$ we also got images of this size as output during the training. It can be observed that the synthesized images already show a quite good quality after $500$ epochs. However, on closer inspection, it is noticeable that the structures of the vortex street become finer with an increasing number of training epochs and that the color space is also captured much better after $2,000$. 
In \cref{fig:training_progress_lpt} the development of the synthesized images during the training is illustrated for the \texttt{pix2pixHD} whose results are discussed in \cref{ssec:results_lpt}. Similar to the DCGAN we can observe that the results improve significantly with increasing number of training epochs.

\begin{figure}[!h]
    \centering
    \includegraphics[width=\textwidth]{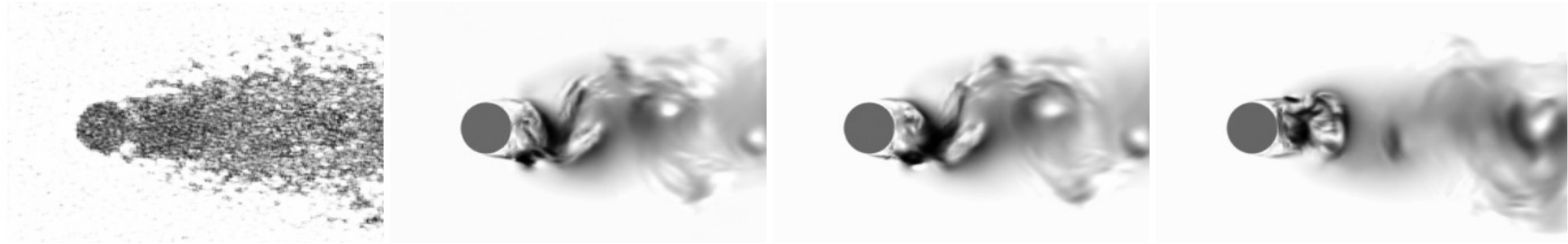}
    \caption{Development of training results after $1$, $500$, $1,500$ and $2,000$ epochs for the DCGAN.}
    \label{fig:training_progress_karman}
\end{figure}
\begin{figure}[!h]
    \centering
    \includegraphics[width=\textwidth]{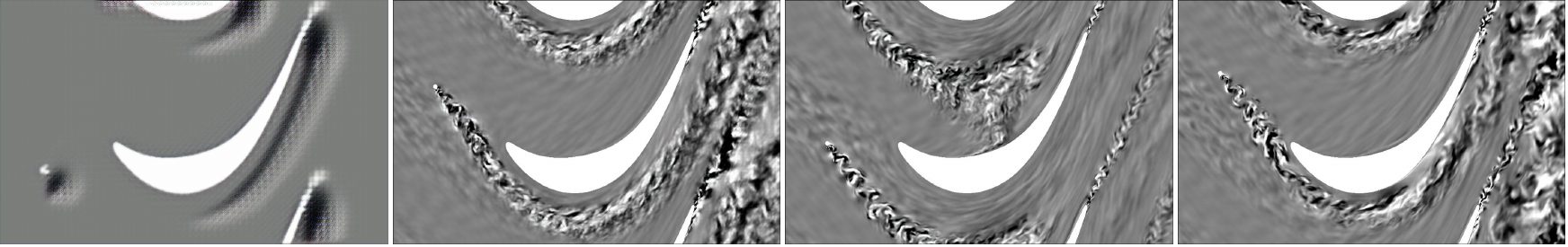}
    \caption{Development of training results after $1$, $50$, $150$ and $200$ epochs for the \texttt{pix2pixHD} framework.}
    \label{fig:training_progress_lpt}
\end{figure}

\section{Equivalence of Jensen-Shannon divergence and the norm of total variation}
\label{sec:appendix-B}

For our purpose, we only have to prove that convergence of probability measures $\nu_n\to\mu$ in the Jensen-Shannon divergence implies convergence in total variation.  In fact,

\begin{align*}
    \|\nu_n-\mu\|_\textup{TV}&\leq \left\|\nu_n-\frac{\mu+\nu_n}{2}\right\|_\textup{TV}+\left\|\mu-\frac{\mu+\nu_n}{2}\right\|_\textup{TV}\\
    &\leq \sqrt{2\mathfrak{d}_\textup{KL}\left(\nu_n\left\|\frac{\mu+\nu_n}{2}\right.\right)}+\sqrt{2\mathfrak{d}_\textup{KL}\left(\mu\left\|\frac{\mu+\nu_n}{2}\right.\right)}\\
    &\leq 2\sqrt{\mathfrak{d}_\textup{KL}\left(\nu_n\left\|\frac{\mu+\nu_n}{2}\right.\right)+\mathfrak{d}_\textup{KL}\left(\mu\left\|\frac{\mu+\nu_n}{2}\right.\right)}\\
    &=2\sqrt{\mathfrak{d}_\textup{JS}(\nu_n\|\mu)}\to 0.
\end{align*}
We used the triangle inequality in the first step, Pinsker's inequality \cite{van2014probability} in the second and the elementary inequality $\sqrt{x}+\sqrt{y}\leq \sqrt{2(x+y)}$ for real numbers $x,y\geq 0$ in the third step, which follows form the binomial formula and the inequality between the geometric and arithmetic mean.
For the converse statement, see e.g. \cite{lin1991divergence}.

\section{Evaluation results for larger grid of pixels}
\label{sec:appendix-C}
Evaluation results for the considered metrics in \cref{sec:physics_based_evaluation} for the area of $25\%$ of the pixel values after the wake.

\begin{figure*}[htb]
    \begin{subfigure}[b]{0.5\textwidth}
	    \includegraphics[width=\textwidth]{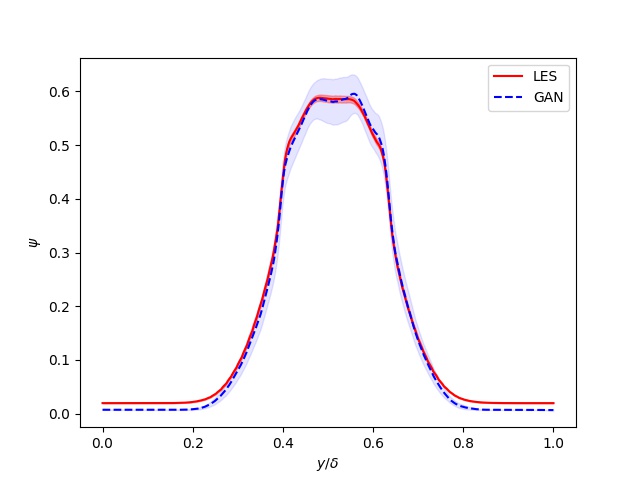}
	    \caption{Mean pixel values}
        \label{fig:eval_res_mean_larger}
    \end{subfigure}%
	\begin{subfigure}[b]{0.5\textwidth}
		\includegraphics[width=\textwidth]{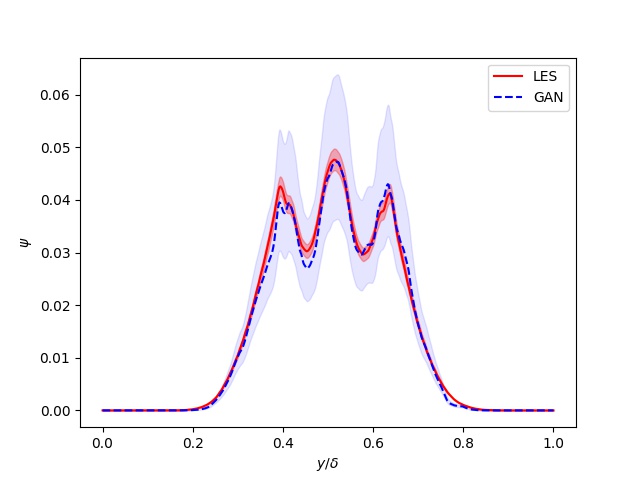}
		\caption{Statistical fluctuation of deviation from mean velocities}
		\label{fig:eval_res_var_larger}
	\end{subfigure}
\caption{Comparison of the mean pixel values (a) and the statistical fluctuation of the velocities (b) for the K\'arm\'an vortex street simulated by LES and generated by GAN along the $y$-axis. The blue and red shaded areas indicate the $95\%$ confidence intervals of the variance for the respective curves.}
\label{fig:evaluation_result_variance_mean_larger_region}
\end{figure*}

\bibliography{bibliography}

\end{document}